\documentclass[conference]{IEEEtran}
\usepackage{cite}
\usepackage{amsmath,amssymb,amsfonts}
\usepackage{algorithmic}
\usepackage{graphicx}
\usepackage{textcomp}
\usepackage{xcolor}
\usepackage{amsthm}
\usepackage{bm}

\usepackage{subcaption}
\usepackage{dblfloatfix}
\usepackage{tikz}
\usepackage{pgfplots}
\pgfplotsset{compat=1.16}

\tikzstyle{projection}=[fill=black, draw=none, shape=circle, scale=0.125]
\tikzstyle{red dot}=[fill=red, draw=black, shape=circle, scale=0.5]
\tikzstyle{blue dot}=[fill=blue, draw=black, shape=rectangle, scale=0.5]
\tikzstyle{red border}=[fill=none, draw=red, shape=circle, scale=0.5]
\tikzstyle{blue border}=[fill=none, draw=blue, shape=rectangle, scale=0.5]
\tikzstyle{green dot}=[fill=green!70!black, draw=black, shape=circle, scale=0.5]
\tikzstyle{black dot}=[fill=black, draw=black, shape=rectangle, scale=0.5]

\tikzstyle{altitude}=[-, densely dashed]
\tikzstyle{red line}=[-, red, line width=1]
\tikzstyle{blue line}=[-, blue, line width=1]
\tikzstyle{green line}=[-, green!70!black, line width=1]
\tikzstyle{black line}=[-, densely dashed, line width=1]
\tikzstyle{segment}=[-, densely dotted, green!35!black, line width=1.25]

\tikzstyle{plane}=[blue!30]
\tikzstyle{polygon}=[blue!20]
\tikzstyle{region}=[lime!20]

\tikzstyle{none}=[]
\pgfdeclarelayer{background}
\pgfdeclarelayer{nodelayer}
\pgfdeclarelayer{edgelayer}
\pgfsetlayers{background,edgelayer,nodelayer}

\newcounter{sectioncnt}[section]

\newtheorem{theorem}{Theorem}[sectioncnt]

\newtheorem{lemma}[theorem]{Lemma}

\def\BibTeX{{\rm B\kern-.05em{\sc i\kern-.025em b}\kern-.08em
    T\kern-.1667em\lower.7ex\hbox{E}\kern-.125emX}}

\newcommand{\rf}{\mathbb{R}}
\newcommand{\wset}{\mathcal{W}}
\newcommand{\arr}{\mathcal{A}}
\newcommand{\upe}{\mathcal{U}}
\newcommand{\lowe}{\mathcal{L}}
\newcommand{\point}{p}
\newcommand{\points}{P}
\newcommand{\plane}{h}

\IEEEoverridecommandlockouts\IEEEpubid{\makebox[\columnwidth]{979-8-3315-2223-0/25/\$31.00~\copyright~2025 IEEE \hfill} \hspace{\columnsep}\makebox[\columnwidth]{ }}

\begin{document}

\title{Robust Bichromatic Classification in 3D Using Planes and Slices}

\author{
\IEEEauthorblockN{Grittin Nuntasombat}
\IEEEauthorblockA{\textit{Department of Computer Engineering} \\
\textit{Kasetsart University}\\
Bangkok, Thailand \\
grittin.n@ku.th}
\and
\IEEEauthorblockN{Nattawut Phetmak}
\IEEEauthorblockA{\textit{Department of Computer Engineering} \\
\textit{Kasetsart University}\\
Bangkok, Thailand \\
nattawut.p@ku.th}
\and
\IEEEauthorblockN{Jittat Fakcharoenphol}
\IEEEauthorblockA{\textit{Department of Computer Engineering} \\
\textit{Kasetsart University}\\
Bangkok, Thailand \\
jtf@ku.ac.th}
}

\maketitle

\begin{abstract}
Given two sets of points in 3-dimensional space $R$ and $B$, 
we want to separate these two sets of points 
using a classifier based on linear constraints, while ensuring robustness against outliers.
The problem was studied in $\rf^2$ by Glazenburg {\em et al.}  
We follow their approach and present various algorithms for many types of classifiers 
under various definitions of outliers.  
Our algorithms rely mainly on the duality of points and planes in $\rf^3$. 
\end{abstract}

\begin{IEEEkeywords}
Computational geometry, Classification, Robustness, Duality
\end{IEEEkeywords}


\section{Introduction}

Data classification is a fundamental task in data analysis and machine learning.  
When data points are in space, separating two classes of data points (later referred to as {\em red} points and {\em blue} points) can often be viewed as a geometric problem of finding a geometrical object to divide the space into pieces containing different classes of points.  In many cases, the data points cannot be cleanly separated; thus, some measure of separation ``goodness'' has to be employed, either in forms of smooth loss function or the number of points treated as outliers.  

In this work, we consider the problem in low dimension, i.e., when data points are in $\rf^3$, and deal with classification boundaries based on two linear separation.  The quality of the classification is based on the number of outliers.  Following the formulation in~Glazenburg {\em et al.}~\cite{Glazenburg24robust} for $\rf^2$, the problem can be formally stated as follows.

Let $B$ and $R$ be two sets of at most $n$ points in $\rf^3$.   
We refer to points in $B$ as {\em blue points} and points in $R$ as {\em red points}.
Our goal is to find a region, $\wset_B$, bounded by at most two planes, that contains {\em all} blue points while excluding {\em all} red points.  
The points that do not satisfy these conditions are referred to as outliers, and we want to minimize the number of them.
There are two types of outliers.
\begin{itemize}
    \item A {\em red outlier} is a red point contained in the interior of $\wset_B$.
    \item A {\em blue outlier} is a blue point {\em not} contained in $\wset_B$, i.e., a blue point that is in the interior of $\rf^3\setminus\wset_B$. 
\end{itemize}
Given $\wset_B$, we let $k_R$ be the number of red outliers and $k_B$ be the number of blue outliers; we also let $k=k_R+k_B$ be the total number of outliers.  For each type of the region $\wset_B$, we are interested in three problems: (1) minimizing the number of red outliers $k_R$ (while keeping $k_B=0$), (2) minimizing the number of blue outliers $k_B$ (while keeping $k_R=0$), and (3) minimizing $k$ (allowing both types of outliers).

Following~\cite{Glazenburg24robust}, the types of region $\wset_B$ that we consider are
\begin{itemize}
    \item a halfplane,
    \item a {\em slice} bounded by two parallel planes $\plane_1$ and $\plane_2$,
    \item a {\em wedge}, i.e., one of the four subspaces induced by a pair of intersecting planes $\plane_1$ and $\plane_2$, and
    \item a {\em diwedge}, i.e., two opposing subspaces induced by a pair of intersecting planes $\plane_1$ and $\plane_2$.
\end{itemize}
Figure~\ref{fig:region-types} illustrates these four types.  When the type of region is clear, we would refer to a region as $\wset_B(h)$ or $\wset_B(h_1,h_2)$ to emphasize that the region is bounded by $h$ (for halfplane regions) or $h_1$ and $h_2$ for other types of regions.

\begin{figure}[!b]
  \centering
  \begin{subfigure}[t]{0.24\textwidth}
    \centering
    \begin{tikzpicture}[scale=0.4]
    \coordinate (1) at (0, 2) {};
    \coordinate (2) at (4, 0) {};
    \coordinate (3) at (6, 5) {};
    \coordinate (4) at (10, 3) {};
    \begin{pgfonlayer}{background}
        \fill [style=polygon] (1) -- (2) -- (4) -- (10,-1) -- (0,-1) -- cycle;
        \fill [style=plane] (1) -- (2) -- (4) -- (3) -- cycle;
    \end{pgfonlayer}
    \begin{pgfonlayer}{nodelayer}
        \node [style=red dot] (0) at (1.25, 3.75) {};
        \node [style=blue border] (5) at (5.25, 5.25) {};
        \node [style=red dot] (6) at (7.5, 4.75) {};
        \node [style=red dot] (7) at (3.75, 5.75) {};
        \node [style=blue dot] (8) at (8.25, 1.75) {};
        \node [style=blue dot] (9) at (2.5, 1.25) {};
        \node [style=blue dot] (10) at (6.25, 0.5) {};
        \node [style=projection] (11) at (1.25, 2.25) {};
        \node [style=projection] (12) at (3.75, 3) {};
        \node [style=projection] (13) at (5.25, 3) {};
        \node [style=projection] (14) at (7.5, 3.75) {};
        \node [style=projection] (15) at (8.25, 3.25) {};
        \node [style=projection] (16) at (6.25, 2.75) {};
        \node [style=projection] (17) at (2.5, 2.25) {};
        \node [style=projection] (18) at (4.25, 0.75) {};
        \node [style=red dot] (19) at (4.25, 1.5) {};
    \end{pgfonlayer}
    \begin{pgfonlayer}{edgelayer}
        \draw (1) -- (2) -- (4) -- (3) -- cycle;
        \draw [style=altitude] (0) to (11);
        \draw [style=altitude] (17) to (9);
        \draw [style=altitude] (7) to (12);
        \draw [style=altitude] (13) to (5);
        \draw [style=altitude] (10) to (16);
        \draw [style=altitude] (14) to (6);
        \draw [style=altitude] (15) to (8);
        \draw [style=altitude] (18) to (19);
    \end{pgfonlayer}
\end{tikzpicture}
    \caption{halfplane}
    \label{fig:halfplane}
  \end{subfigure}
  \hfill
  \begin{subfigure}[t]{0.24\textwidth}
    \centering
    \begin{tikzpicture}[scale=0.4]
    \node at (0,-0.5) {}; 
    \coordinate (1) at (0, 2) {};
    \coordinate (2) at (4, 0) {};
    \coordinate (3) at (6, 5) {};
    \coordinate (4) at (10, 3) {};
    \coordinate (20) at (0, 5) {};
    \coordinate (21) at (4, 3) {};
    \coordinate (22) at (10, 6) {};
    \coordinate (23) at (6, 8) {};
    \begin{pgfonlayer}{background}
        \fill [style=polygon] (1) -- (2) -- (4) -- (22) -- (23) -- (20) -- cycle;
        \fill [style=plane] (1) -- (2) -- (4) -- (3) -- cycle;
        \fill [style=plane] (20) -- (21) -- (22) -- (23) -- cycle;
    \end{pgfonlayer}
    \begin{pgfonlayer}{nodelayer}
        \node [style=red dot] (0) at (1, 1) {};
        \node [style=blue border] (5) at (1.25, 7.25) {};
        \node [style=red dot] (6) at (7.75, 2.75) {};
        \node [style=red dot] (7) at (3, 7.25) {};
        \node [style=blue border] (8) at (8.75, 1.5) {};
        \node [style=blue dot] (9) at (2.25, 3.5) {};
        \node [style=blue dot] (10) at (5.5, 2) {};
        \node [style=projection] (11) at (1, 2) {};
        \node [style=projection] (12) at (3, 6) {};
        \node [style=projection] (13) at (1.25, 5.25) {};
        \node [style=projection] (14) at (7.75, 3.5) {};
        \node [style=projection] (15) at (8.75, 2.75) {};
        \node [style=projection] (16) at (5.5, 1.25) {};
        \node [style=projection] (17) at (2.25, 2.25) {};
        \node [style=none] (18) at (6.25, 1.5) {};
        \node [style=red dot] (19) at (6.25, 0.5) {};
        \node [style=red dot] (24) at (8.25, 7.25) {};
        \node [style=red dot] (25) at (4.25, 4.75) {};
        \node [style=projection] (26) at (4.25, 3.75) {};
        \node [style=projection] (27) at (8.25, 6.5) {};
        \node [style=projection] (28) at (2.25, 5.25) {};
        \node [style=projection] (29) at (5.5, 4.25) {};
        \node [style=projection] (30) at (6.75, 3.5) {};
        \node [style=projection] (31) at (6.75, 6.5) {};
        \node [style=blue dot] (32) at (6.75, 5.5) {};
    \end{pgfonlayer}
    \begin{pgfonlayer}{edgelayer}
        \draw (1) -- (2) -- (4) -- (3) -- cycle;
        \draw (20) -- (21) -- (22) -- (23) -- cycle;
        \draw [style=altitude] (0) to (11);
        \draw [style=altitude] (7) to (12);
        \draw [style=altitude] (13) to (5);
        \draw [style=altitude] (14) to (6);
        \draw [style=altitude] (15) to (8);
        \draw [style=altitude] (18) to (19);
        \draw [style=altitude] (27) to (24);
        \draw [style=altitude] (26) to (25);
        \draw [style=altitude] (17) to (28);
        \draw [style=altitude] (16) to (29);
        \draw [style=altitude] (30) to (31);
    \end{pgfonlayer}
\end{tikzpicture}
    \caption{slice}
    \label{fig:slice}
  \end{subfigure}
  \hfill
  \begin{subfigure}[t]{0.24\textwidth}
    \centering
    \begin{tikzpicture}[scale=0.4]
    \node at (0,-0.5) {}; 
    \coordinate (1) at (0, 2) {};
    \coordinate (2) at (4, 0) {};
    \coordinate (3) at (6, 5) {};
    \coordinate (4) at (10, 3) {};
    \coordinate (20) at (8, 0) {};
    \coordinate (21) at (6, 6) {};
    \coordinate (22) at (4, 2) {};
    \coordinate (23) at (2, 8) {};
    \coordinate (24) at (7.425, 1.725) {};
    \coordinate (25) at (3.425, 3.725) {};
    \begin{pgfonlayer}{background}
        \fill [style=polygon] (24) -- (4) -- (10,8) -- (23) -- (25) -- cycle;
        \fill [style=plane] (1) -- (2) -- (4) -- (3) -- cycle;
        \fill [style=plane] (20) -- (21) -- (23) -- (22) -- cycle;
    \end{pgfonlayer}
    \begin{pgfonlayer}{nodelayer}
        \node [style=red dot] (0) at (1, 6.75) {};
        \node [style=blue border] (5) at (3, 3.25) {};
        \node [style=red dot] (6) at (8.75, 1.25) {};
        \node [style=red dot] (7) at (2.25, 3.75) {};
        \node [style=blue dot] (8) at (8.25, 6.5) {};
        \node [style=blue dot] (9) at (4.75, 5.5) {};
        \node [style=blue dot] (10) at (7.5, 3.5) {};
        \node [style=projection] (11) at (1, 2) {};
        \node [style=projection] (12) at (2.25, 2.75) {};
        \node [style=projection] (13) at (3, 1.75) {};
        \node [style=projection] (14) at (8.75, 2.75) {};
        \node [style=projection] (15) at (8.25, 3.25) {};
        \node [style=projection] (16) at (7.5, 2.75) {};
        \node [style=projection] (17) at (4.75, 3.75) {};
        \node [style=projection] (18) at (4.75, 0.75) {};
        \node [style=red dot] (19) at (4.75, 2.75) {};
        \node [style=blue border] (26) at (2, 0.5) {};
        \node [style=projection] (27) at (2, 1.5) {};
        \node [style=red dot] (28) at (5.5, 4) {};
        \node [style=projection] (29) at (5.5, 4.5) {};
        \node [style=red dot] (30) at (3.75, 0.5) {};
        \node [style=projection] (31) at (3.75, 1.5) {};
        \node [style=blue border] (32) at (6.25, 3) {};
        \node [style=projection] (33) at (6.25, 4) {};
    \end{pgfonlayer}
    \begin{pgfonlayer}{edgelayer}
        \draw (1) -- (2) -- (4) -- (3) -- cycle;
        \draw (20) -- (21) -- (23) -- (22) -- cycle;
        \draw (24) -- (25);
        \draw [style=altitude] (0) to (11);
        \draw [style=altitude] (17) to (9);
        \draw [style=altitude] (7) to (12);
        \draw [style=altitude] (13) to (5);
        \draw [style=altitude] (10) to (16);
        \draw [style=altitude] (14) to (6);
        \draw [style=altitude] (15) to (8);
        \draw [style=altitude] (18) to (19);
        \draw [style=altitude] (27) to (26);
        \draw [style=altitude] (29) to (28);
        \draw [style=altitude] (30) to (31);
        \draw [style=altitude] (32) to (33);
    \end{pgfonlayer}
\end{tikzpicture}
    \caption{wedge}
    \label{fig:wedge}
  \end{subfigure}
  \hfill
  \begin{subfigure}[t]{0.24\textwidth}
    \centering
    \begin{tikzpicture}[scale=0.4]
    \node at (0,-0.5) {}; 
    \coordinate (1) at (0, 2);
    \coordinate (2) at (4, 0);
    \coordinate (3) at (6, 5);
    \coordinate (4) at (10, 3);
    \coordinate (20) at (8, 0);
    \coordinate (21) at (6, 6);
    \coordinate (22) at (4, 2);
    \coordinate (23) at (2, 8);
    \coordinate (24) at (7.425, 1.725);
    \coordinate (25) at (3.425, 3.725);
    \begin{pgfonlayer}{background}
        \fill [style=polygon] (24) -- (4) -- (10,8) -- (23) -- (25) -- cycle;
        \fill [style=polygon] (24) -- (20) -- (8,-0.5) -- (0,-0.5) -- (1) -- (25) -- cycle;
        \fill [style=plane] (1) -- (2) -- (4) -- (3) -- cycle;
        \fill [style=plane] (20) -- (21) -- (23) -- (22) -- cycle;
    \end{pgfonlayer}
    \begin{pgfonlayer}{nodelayer}
        \node [style=red dot] (0) at (1, 5.5) {};
        \node [style=blue border] (5) at (3.25, 2.5) {};
        \node [style=red dot] (6) at (9, 1.75) {};
        \node [style=red dot] (7) at (2.5, 3.5) {};
        \node [style=blue dot] (8) at (8.25, 7.25) {};
        \node [style=blue dot] (9) at (4.75, 6) {};
        \node [style=blue dot] (10) at (6.5, 3.75) {};
        \node [style=projection] (11) at (1, 2) {};
        \node [style=projection] (12) at (2.5, 2.25) {};
        \node [style=projection] (13) at (3.25, 1.25) {};
        \node [style=projection] (14) at (9, 3) {};
        \node [style=projection] (15) at (8.25, 2.75) {};
        \node [style=projection] (16) at (6.5, 3) {};
        \node [style=projection] (17) at (4.75, 3.75) {};
        \node [style=projection] (18) at (4.25, 0.75) {};
        \node [style=red dot] (19) at (4.25, 3) {};
        \node [style=blue dot] (26) at (1.75, 0.25) {};
        \node [style=projection] (27) at (1.75, 1.75) {};
        \node [style=red dot] (28) at (5.75, 3.75) {};
        \node [style=projection] (29) at (5.75, 4.5) {};
        \node [style=blue dot] (30) at (5.25, 1) {};
        \node [style=projection] (31) at (5.25, 2) {};
        \node [style=blue border] (32) at (7.5, 2.5) {};
        \node [style=projection] (33) at (7.5, 3.75) {};
    \end{pgfonlayer}
    \begin{pgfonlayer}{edgelayer}
        \draw (1) -- (2) -- (4) -- (3) -- cycle;
        \draw (20) -- (21) -- (23) -- (22) -- cycle;
        \draw (24) -- (25);
        \draw [style=altitude] (0) to (11);
        \draw [style=altitude] (17) to (9);
        \draw [style=altitude] (7) to (12);
        \draw [style=altitude] (13) to (5);
        \draw [style=altitude] (10) to (16);
        \draw [style=altitude] (14) to (6);
        \draw [style=altitude] (15) to (8);
        \draw [style=altitude] (18) to (19);
        \draw [style=altitude] (27) to (26);
        \draw [style=altitude] (29) to (28);
        \draw [style=altitude] (30) to (31);
        \draw [style=altitude] (32) to (33);
    \end{pgfonlayer}
\end{tikzpicture}
    \caption{diwedge}
    \label{fig:diwedge}
  \end{subfigure}
  \caption{Types of regions $\wset_B$ in $\rf^3$ in this paper.  Solid blue square represent a point in $W_B$, while empty blue square represent blue outliers.}
  \label{fig:region-types}
\end{figure}

We prove a simple property of the optimal region and provide a polynomial time algorithm for all region types.  
When specializing on halfplanes and slices, we provide faster algorithms.

While most data analysis works with data points in a much higher dimension, recent focuses on explainability and transparency have shifted the toolbox back to simpler model of classification such as decision trees where decision boundaries can be easily visualized.  
While one-attribute threshold-based decision is the simplest form of visualizable boundary, a recent result~\cite{KairgeldinCP24-bivariate} suggested that using two attributes might also be useful.  
This motivates our work for robust classification problems for three attributes, i.e., when data points are in $\rf^3$.

{\bf Related work.} 
In $\rf^2$, these classification problems with outliers have been considered in a unified framework in Glazenburg, van der Horst, Peters, Speckmann, and Staals~\cite{Glazenburg24robust}.  
Finding a line (in $\rf^2$ case) or a plane (in $\rf^3$ case) that separate two sets of points can be solved using a linear-time linear programming algorithm~\cite{Megiddo84-linear-time}.  
The problem of perfectly separating two sets of points in $\rf^2$ using at most two lines has been considered by Hurtado~{\em et al.}~\cite{HurtadoNRS01-sep-wedges,HurtadoMRS04-sep-two-lines}.  For perfect separability of two sets in $\rf^3$, Hurtado~{\em et al.}~\cite{HurtadoSS03-sep3d} also presented algorithms for strips, wedges, and diwedge under various criteria.
When dealing with at most $k$ outliers, finding separating hyperplanes in $\rf^2$ and in $\rf^3$ can be solved using linear programming with violations; Chan~\cite{Chan05-lp-violations} presented algorithms running in time $O((n+k^2)\log n)$ and $O(n\log n + k^{11/4}n^{1/4}\mathrm{polylog}\;n)$, respectively.

{\bf Our contribution.} 
In this paper, we focus on problems in $\rf^3$. 
We present a general algorithm that solves all problems in time $O(n^6)$ based on simple characterizations of optimal solutions.  
For halfplane classifiers and slice classifiers, we also present faster algorithms.  Details of our contribution are in Table~\ref{table:overview}.
We follow the approach by Glazenburg {\em et al.}~\cite{Glazenburg24robust} for problems in $\rf^2$ but extend the ideas into $\rf^3$.  Our algorithms rely heavily on duality as it is typically simpler to search for points instead of planes; Our specialized algorithms also follow the sweep framework. 

\begin{table*}
\centering
\begin{tabular}{ |c|c|c|c||c|c|c|c|c| } 
  \hline
   & \multicolumn{3}{|c||}{2D (all from~\cite{Glazenburg24robust})} & \multicolumn{3}{|c|}{3D} \\ 
      \hline
  Minimize: & Red $k_r$ & Blue $k_b$ & Both $k$ & Red $k_r$ & Blue $k_b$ & Both $k$ \\
  \hline
  \hline
  General & \multicolumn{3}{|c||}{$n^4$} & \multicolumn{3}{|c|}{$\boldsymbol{n}^{\boldsymbol{6}}$; $\boldsymbol{n}^{\boldsymbol{7}}$ for slice} \\ 
  \hline
  Halfplane & \multicolumn{2}{|c|}{$n \log n$} & $(n+k)^ 2\log{n}$ & \multicolumn{2}{|c|}{${\boldsymbol{n}^{\boldsymbol{3}}}$} & $n\log{n} +k^{11/4}n^{1/4} \log^{c}{n}$ \cite{Chan05-lp-violations} \\
  \hline 
  Strip/Slice & $n\log n$ &\multicolumn{2}{|c||}{$n^2 \log n$} & 
  \multicolumn{2}{|c|}{$\boldsymbol{n}^{\boldsymbol{3}} \mathbf{log}\; \boldsymbol{n}$} & $\boldsymbol{n}^{\boldsymbol{5}}$\\
  \hline
  Wedge & $n \log n$ & $n^{5/2}\log n$ & $nk^2\log^3{n}\log{k}$ & \multicolumn{3}{c|}{$\boldsymbol{n}^{\boldsymbol{6}}$  (general algorithm)}\\
  \hline
  Double/di wedges & $n^2$ & $n^2 \log n$ & $n^2k\log^3{n}\log{k}$ & \multicolumn{3}{c|}{$\boldsymbol{n}^{\boldsymbol{6}}$ (general algorithm)} \\
  \hline
\end{tabular}
\caption{Overview of our contribution, shown in bold.  Other results are from the work cited.}
\label{table:overview}
\end{table*}

Section~\ref{sect:prelim} provides background materials.  
We present a general algorithm in Section~\ref{sect:opt-classifiers}.  
Faster algorithms for halfplane classifiers are described in Section~\ref{sect:halfplane-classifiers}.
Finally, we deal with slice classifiers in Section~\ref{sect:slice-classifiers}.


\section{Preliminaries}
\label{sect:prelim}

In this section, we review key duality concepts and state various fundamental results.

\subsection{Duality}

Point-line duality~\cite{BergCKO08-comp-geo} is a standard ingredient in computation geometry, as one can sometimes transform a problem on points and lines (or hyperplanes) into a simpler one. 
We employ the notion in $\rf^3$.

Given point $\point =(a,b,c)$ in $\rf^3$, its dual is a plane $\point^*: z = ax+by-c$.  
On the other hand, for a plane $h: z=a'x+b'y+c'$, its dual is a point $h^* = (a',b',-c')$.  
This dual mapping has an important property: if $\point$ lies {\em above} a plane $\plane$ in the {\em primal} space, 
the plane $\point^*$ lies {\em below} the point $\plane^*$ in the {\em dual} space.
For a set of points $\points$, we are interested in the {\em arrangement} that is constructed by a set of dual plane $\points^*$, $\arr(\points^*)$, i.e. the vertices, edges, faces and cells formed by the planes in $\points^*$.

We denote the {\em upper envelope} of the arrangement $\points^*$, i.e. the surface formed by the highest portions of these planes in $\arr(\points^*)$, by $\upe(\points^*)$, and the {\em lower envelope} by $\lowe(\points^*)$.

We are often interested in the intersection of $\arr(\points^*)$ with another plane $\plane$. The intersection of $\arr(\points^*)$ and $\plane$ is an arrangement of lines. We denote this arrangement of lines as $\arr_\plane(\points^*)$, the upper envelope and the lower envelope of this arrangement as $\upe_\plane(\points^*)$ and $\lowe_\plane(\points^*)$ respectively.

While we work in $\rf^3$, as an example, Fig.~\ref{fig:duality2d} shows points and lines in the primal space and their corresponding lines and points in the dual.

\begin{figure}[!b]
  \begin{subfigure}[t]{0.24\textwidth}
    \centering
    \begin{tikzpicture}[scale=0.15]
    \node at (-8,-12) {};
    \node at (14,12) {};
    \coordinate (B1) at (-4,-8);
    \coordinate (B2) at (2,-10);
    \coordinate (B3) at (6,-3);
    \coordinate (R1) at (12,-2);
    \coordinate (R2) at (-2,4);
    \coordinate (R3) at (6,10);
    \coordinate (GL1) at (-8,-0.8);
    \coordinate (GL2) at (14,-9.6);
    \coordinate (BL1) at (-8,-4.8);
    \coordinate (BL2) at (14,8.4);
    \begin{pgfonlayer}{nodelayer}
        \node [style=blue dot] at (B1) {};
        \node [style=blue dot] at (B2) {};
        \node [style=blue dot] at (B3) {};
        \node [style=red dot] at (R1) {};
        \node [style=red dot] at (R2) {};
        \node [style=red dot] at (R3) {};
    \end{pgfonlayer}
    \begin{pgfonlayer}{edgelayer}
        \draw [style=green line] (GL1) -- (GL2);
        \draw [style=black line] (BL1) -- (BL2);
    \end{pgfonlayer}
    \begin{pgfonlayer}{background}
        \fill [style=region] (GL1) -- (GL2) -- (BL2) -- (BL1) -- cycle;
    \end{pgfonlayer}
\end{tikzpicture}
    \caption{primal}
    \label{fig:primal}
  \end{subfigure}
  \hfill
  \begin{subfigure}[t]{0.24\textwidth}
    \centering
    \begin{tikzpicture}[scale=0.15]
    \node at (-8,-12) {};
    \node at (14,12) {};
    \coordinate (G) at (6,0);
    \coordinate (B) at (-4,4);
    \coordinate (B1) at (-4,-12);
    \coordinate (B2) at (2,-10);
    \coordinate (B3) at (6,-2);
    \coordinate (R1) at (-6,12);
    \coordinate (R2) at (-2,4);
    \coordinate (R3) at (6,10);
    \coordinate (B1L1) at (-8,11.2);
    \coordinate (B1L2) at (14,2.4);
    \coordinate (B2L1) at (-8,8.4);
    \coordinate (B2L2) at (10,12);
    \coordinate (B3L1) at (-8,-1.8);
    \coordinate (B3L2) at (14,11.4);
    \coordinate (R1L1) at (-8,-7.6);
    \coordinate (R1L2) at (25/3,12);
    \coordinate (R2L1) at (-8,2.4);
    \coordinate (R2L2) at (14,-6.8);
    \coordinate (R3L1) at (-10/3,-12);
    \coordinate (R3L2) at (14,-1.6);
    \begin{pgfonlayer}{nodelayer}
        \node [style=green dot] at (G) {};
        \node [style=black dot] at (B) {};
    \end{pgfonlayer}
    \begin{pgfonlayer}{edgelayer}
        \draw [style=blue line] (B1L1) -- (B1L2);
        \draw [style=blue line] (B2L1) -- (B2L2);
        \draw [style=blue line] (B3L1) -- (B3L2);
        \draw [style=red line] (R1L1) -- (R1L2);
        \draw [style=red line] (R2L1) -- (R2L2);
        \draw [style=red line] (R3L1) -- (R3L2);
        \draw [style=segment] (G) -- (B);
    \end{pgfonlayer}
\end{tikzpicture}
    \caption{dual}
    \label{fig:dual}
  \end{subfigure}

  \caption{Duality in $\rf^2$.  The red and blue points and two lines in the primal are mapped to red and blue lines and two points in the dual.}
  \label{fig:duality2d}
\end{figure}

\subsection{Sweep framework}

Our algorithms use the sweep framework.  We only give a quick overview here; for more comprehensive treatment, please see~\cite{BergCKO08-comp-geo}.    

When dealing with many geometrical objects, it is easier to work with them on spaces with smaller dimensions.  
For example, in $\rf^2$, instead of dealing with a number of lines, consider their intersections with a vertical line $x=t$ for some $t$, the lines can be viewed as points.  
An algorithm may ``sweep'' the vertical line $x=t$ through the plane by gradually increasing $t$, while working with these lines as moving points.  
The algorithm designer can thus focus on a one-dimensional problem (on the vertical line) and deal exclusively with how these moving points (from many lines) interact.  
The algorithm generally becomes an event-driven one whose events take place when geometrical objects interact with one another.   

To implement the sweep algorithm efficiently, we often require efficient event queue implementation based on priority queues and other data structures for maintaining the intersections of objects.  Since we deal with objects in $\rf^3$, we sweep a plane $x=t$ through the space.  The geometrical objects that we work with are planes; so their intersections with the sweeping plane become a set of lines, forming an arrangement.


\subsection{Data structures}

To implement our algorithms, we need standard data structures, such as binary trees and priority queues.  
Moreover, as we consider intersections of a set of planes with a sweep plane to obtain an arrangement, we require a data structure suitable for maintaining points and edges with topological information.  
Doubly Connected Edge List (or, DCEL)~\cite{BergCKO08-comp-geo} is a data structure used to represent planar subdivisions, such as polygons or triangulations, that contains vertices, edges, and faces.  
It maintains a set of vertices (possibly with coordinates) and, for each vertex, a counterclockwise list of adjacent edges.  Each edge is represented as two back-and-forth half-edges.  
Each half edge also stores a reference to its adjacent face, while each face keeps a list of half-edges on its boundary in counterclockwise ordering.
The DCEL supports edge traversal, face boundary exploration, and can also be efficiently modified dynamically.



\section{Properties of an optimal classifier}
\label{sect:opt-classifiers}
In this section, we prove properties of optimal $\wset_B$ for each type of regions to motivate our algorithms.  We will work mostly in the dual; the next subsection provides a review.

\subsection{Points and planes in the dual}
\label{sect:dual-problem-formulation}

We describe our settings in the dual.  Since points become planes and planes become points, each point $p\in R\cup B$ becomes a plane $p^*$.  A primal plane $h$ below $p$ becomes a point $h^*$ above the dual plane $p^*$.  

To see that duality can be useful, we refer to Fig.~\ref{fig:duality2d} for an example in $\rf^2$.  Note that the green line is below all blue points and one red point; its dual, the blue dual point lies above all blue lines and one red line.  For the black primal line, since it lies above all red points and one blue point, its dual black point is above all red lines and one blue line.

For concreteness, suppose that we want to find the best halfplane classifier.  We assume that region $\wset_B$ is defined to be the halfspace ``above'' some decision plane.  (To find a classifier defined to be the halfspace ``below'' a plane, one can just flip the points vertically.)   In this case, when searching for plane $h$ defining $\wset_B$, we know that a dual point $h^*$ that lies above a blue dual plane $p^*$ would correctly contain a primal blue point $p$.  Therefore, when we minimize the number of red outliers (while keeping $k_B=0$), we find the dual point $h^*$ that is above {\em all} blue planes while minimizing the number of red planes below $h^*$.  On the other hand, if our goal is to minimize the number of blue outliers (keeping $k_R=0$), the target dual point $h^*$ has to be below {\em all} red planes maximizing the number of blue planes below $h^*$.  Table~\ref{table:dual-problems-halfplane} describes the dual versions of the problems when $\wset_B$ is a halfplane.

\begin{table}
\centering
\begin{tabular}{|p{2cm}|p{5.5cm} |}
    \hline 
    Minimizing & Problems \\
    \hline
    Red outliers $k_R$ & 
    {\bf Primal:}
    Find plane $h$ above all blue points,
    minimizing the number of red points below $h$. \newline
    {\bf Dual:}
    Find point $h^*$ above all blue planes,
    minimizing the number of red planes below $h^*$. \\
    \hline
    Blue outliers $k_B$ & 
    {\bf Primal:}
    Find plane $h$ below all red points,
    maximizing the number of blue points above $h$. \newline
    {\bf Dual:}
    Find point $h^*$ below all red planes,
    maximizing the number of blue planes above $h^*$. \\
    \hline
    Both outliers $k_R+k_B$ & 
    {\bf Primal:}
    Find plane $h$,
    minimizing the sum of the number of red points below $h$
    and the number of blue points above $h$. \newline
    {\bf Dual:}
    Find point $h^*$,
    minimizing the sum of the number of red planes below $h$
    and the number of blue planes above $h$. \\
    \hline
\end{tabular}
\caption{Dual problems for halfplane classifiers}
\label{table:dual-problems-halfplane}
\end{table}

For regions bounded by two planes $\wset_B(h^*_1,h^*_2)$, the same approach works.  In this case, we want to find two dual points $h^*_1$ and $h^*_2$ satisfying particular conditions.  For example, when $\wset_B$ is a slice region, $h^*_1$ and $h^*_2$ are on the same vertical line, and the segment $\overline{h^*_1 h^*_2}$ intersects every dual plane corresponding to points in the region $\wset_B(h^*_1,h^*_2)$.

\subsection{Optimal classifiers}

From the discussion above, in this section, we prove canonical forms of the optimal solutions for various types of regions.  The first lemma deals with slice classification.

\begin{lemma}
\label{lem:3pointsforslice}
For any optimal solution $\wset_B(\plane_1,\plane_2)$ for a slice classification problem, there exists an optimal solution $\wset_B(h_1',h_2')$ such that one and only one of the following is true:
\begin{itemize}
    \item In the primal, one of the plane intersects at least 1 blue point and at least 1 red point. The other plane intersects at least 1 blue point and 1 point of any color, or
    \item In the primal, one of the plane intersects at least 1 blue point, at least 1 red point and 1 point of any color. The other plane intersects at least 1 blue point.
\end{itemize}
\end{lemma}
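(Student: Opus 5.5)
We work with the closed slice and the primal outlier definitions. A red point on $h_1$ or $h_2$ is not in the interior of $\wset_B$, so it is not a red outlier. A blue point on a boundary plane is in $\wset_B$, so it is not a blue outlier. Hence any continuous motion of $h_1,h_2$ that never lets a point cross a boundary plane keeps $k_R$ and $k_B$ unchanged. The plan is to start from an arbitrary optimal slice $\wset_B(h_1,h_2)$ and apply such motions until enough points lie on the planes. Each step keeps the two planes parallel and stops the first time a plane touches a new point. We assume general position: no four points coplanar, no degenerate parallel configurations. This is what makes the final configuration satisfy exactly one of the two alternatives.

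\textbf{Step 1 (translations).} If some blue point is in $\wset_B$, translate $h_1$ toward the interior of the slice until it hits a point. If the slice contains no blue point, then $k_B = |B|$ is forced; this degenerate case is handled separately by making the slice thin and placing it through a blue point, or we simply assume $B \cap \wset_B \ne \emptyset$. When $h_1$ hits a point, two cases arise.
\begin{itemize}
\item If the point hit is blue, $h_1$ now contains a blue point.
\item If it is red, we instead translate outward. Shrinking the slice inward past a red point on the interior side never loses that red point as a non-outlier. The cleanest rule is to shrink inward until hitting a blue point, since crossing red points inward can only decrease $k_R$ and contradicts nothing; optimality means the count is still optimal.
\end{itemize}
Doing the same for $h_2$ ensures that each plane contains at least one blue point, and $k$ does not increase. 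Parallel planes through $b_1 \in h_1$ and $b_2 \in h_2$ with a common normal $\nu$ now have two rotational degrees of freedom.

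\textbf{Step 2 (rotations).} Rotate the common normal $\nu$ continuously, keeping $h_1 \ni b_1$ and $h_2 \ni b_2$. The goal is the extra incidences: the alternatives need one additional red point and one additional point of any color, i.e.\ two more incidences, matching the two degrees of freedom.

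First rotate about the axis direction along which no point is forced to move, for example within a one-parameter family. Stop at the first new incidence. When a plane hits a red point $r$ we keep it, because a red point on the boundary is harmless. If the first hit is blue, we keep rotating and count it as the ``any color'' point. The one-parameter families for the second freedom are then those preserving the incidences found so far. A plane through two points plus the parallel partner through $b_2$ still has one degree of freedom; rotate until a third incidence occurs.

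The resulting distribution of incidences over the two planes gives either (2,2), the first alternative, or (3,1), the second.

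\textbf{Main obstacle.} The hardest step is guaranteeing that at least one red point ends up on a plane that already carries a blue point, rather than only blue points. Rotating can cross red points, and a crossing changes the outlier count. I would argue as follows. While rotating, the count changes only at incidence events. At the first red incidence, the red point is either entering or leaving the slice. In the leaving direction we stop, with the configuration valid and $k$ unchanged. In the entering direction we reverse the rotation, which still stops at some incidence within the family. If that incidence is blue, we continue after absorbing it.

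Exactly-one then follows from general position, since three or more points on one plane plus others is impossible beyond these counts. I expect careful bookkeeping of these direction choices, together with the fact that each rotation family contains a direction that does not increase $k$, to be the crux.
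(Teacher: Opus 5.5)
Your Step~1 matches the paper's first step. The red-first case needs no special rule: if a shrinking plane reached a red point before a blue one, you could cross it and remove an outlier, contradicting optimality.

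The genuine gap is in Step~2. You keep every new incidence as a constraint, so the two rotational degrees of freedom are used up by the first two points hit. Nothing stops both of them from being blue. You would then end at a configuration such as $(B,B),(B,B)$ or $(B,B,B),(B)$, with no red point on either plane, which satisfies neither alternative. Your fix for the ``main obstacle'' does not address this, and in fact runs the wrong way:
\begin{itemize}
\item A red point reached from outside is exactly the incidence you want. Stopping at contact leaves the count unchanged, since a red point on a boundary plane is not an outlier, so reversing discards it.
\item The ``leaving'' case, an interior red point about to exit, cannot be the first event for an optimal slice, because continuing would lower the count.
\item ``Absorbing'' a blue hit simply spends another degree of freedom on a blue point.
\end{itemize}

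The idea you are missing, and the one the paper uses, is that blue hits must not consume a degree of freedom during the first rotation. Project along a fixed direction parallel to the planes, so that $h_1,h_2$ become parallel lines through blue pivots, and rotate both in the same sense. Whenever a line reaches a blue point from the inside, make that point the new pivot (gift-wrapping around the blue points in the slice). The old pivot then moves into the slice and no blue outlier appears. This one-parameter motion stops only when a plane reaches a red point, so that plane now carries a blue and a red point. The second degree of freedom is rotation about the line through these two points, with the partner plane kept parallel through its blue pivot. It yields the fourth incidence of arbitrary color, i.e.\ $(B,R),(B,\ast)$ or $(B,R,\ast),(B)$.

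Note that termination of the rolling step needs a red point that is not an outlier. If such a point is not already on a plane, it lies outside the slice. Because the lines always support the hull of the projected inside blue points, it is then reached within a half-turn. The paper's ``if we have at least one red point'' is not enough. Without such a point the statement actually fails: when minimizing $k_R$ with a single red point strictly inside the convex hull of $B$, every feasible slice has that point in its interior, so no optimal slice has a red point on a plane. Any repair of your argument has to build in this hypothesis as well.
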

\begin{proof}
We note that in an optimal solution $\wset_B(h_1,h_2)$ contains at least one blue point.
We move the two planes toward each other until each plane hit a blue point (stop moving if a plane already hits a blue point). 
A plane cannot hit a red point before a blue point, as it would give a better solution with one fewer outlier. 

We can then project every point onto a plane perpendicular to both $h_1$ and $h_2$; on this projection $h_1$ and $h_2$ are now parallel lines.  We can then rotate the two projected lines clockwisely, each using a blue point as a point of rotation, until one of the lines hit a red point; this rotation in the projected plane gives the rotation of $h_1$ and $h_2$ in the original space.  If we hit a blue point before that, use the new blue point as a new point of rotation.  If we have at least one red point, this process surely terminates.

Finally, consider we can rotate the two planes in 3D using the projected lines as the rotation axes until one of them hits another point; this proves the lemma. 
\end{proof}

The following lemma regarding halfplane classifiers can be proved similarly.

\begin{lemma}
    \label{lem:cannonical-halfplane}
    For an optimal solution $\wset_B(h)$ for halfplane classification, there exists an optimal solution $\wset_B(h')$ where $h'$ intersects at least 3 points.
\end{lemma}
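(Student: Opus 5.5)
We argue in the dual, following the formulation of Section~\ref{sect:dual-problem-formulation}. Fix an optimal halfplane solution $\wset_B(h)$, say the halfspace above $h$ (the ``below'' case follows by flipping vertically). Its dual is a point $h^*$, and the number of outliers depends only on which dual planes of $R\cup B$ lie strictly above, strictly below, or pass through $h^*$. Outliers are counted with strict inequalities: a red point on $h$ is not in the interior of $\wset_B$, and a blue point on $h$ is in $\wset_B$. Hence moving $h^*$ onto a dual plane never increases the outlier count. So it is enough to move $h^*$, without increasing the number of outliers, to a point lying on at least three dual planes, i.e.\ a vertex of $\arr((R\cup B)^*)$. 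Dualizing back, $h'$ then passes through at least three points. We assume general position, in particular that the dual arrangement has vertices (some three dual planes have linearly independent normals).

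We proceed in three steps, in the spirit of the proof of Lemma~\ref{lem:3pointsforslice}. First, if $h^*$ lies in the interior of a cell of $\arr((R\cup B)^*)$, move it vertically until it first meets a dual plane. Along the way no plane changes side with respect to $h^*$, and at the end one plane passes through it; by the remark above the count does not increase. (In the primal, this translates $h$ until it hits a point.) Second, $h^*$ now lies in the relative interior of a face contained in some plane $p_1^*$. Move $h^*$ inside $p_1^*$ along any line until it meets another dual plane $p_2^*$. Again sidedness is preserved up to touching, so optimality is kept. (In the primal, this rotates $h$ about $p_1$.) Third, $h^*$ lies on the line $p_1^*\cap p_2^*$. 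Move it along this line until it meets a third plane $p_3^*$, landing on a vertex. (In the primal, this rotates $h$ about the axis through $p_1$ and $p_2$.)

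The main obstacle is ensuring that each movement actually meets a new plane instead of escaping to infinity. In step one, if the vertical ray above $h^*$ meets no plane, move downward instead. Some dual plane exists, since $R\cup B\neq\emptyset$, and every dual plane $p^*$ is non-vertical, so the vertical line through $h^*$ crosses it. In steps two and three we may choose the direction of movement: choose it so that it is not parallel to all remaining planes. Such a direction exists whenever the point set is not degenerate, since otherwise all dual planes would share a common line direction, which corresponds to all primal points being collinear. Only in that degenerate case can fewer than three points be hit, and we exclude it by the general position assumption.

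The process terminates after three steps with a point $h'^*$ on three dual planes and an outlier count no larger than the original. Thus $\wset_B(h')$ is optimal and $h'$ intersects at least three points.
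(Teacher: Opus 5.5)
Your proof is correct and follows the paper's argument: the paper says this lemma is proved ``similarly'' to Lemma~\ref{lem:3pointsforslice}, i.e.\ translate $h$ until it hits a point, rotate about that point, then rotate about the line through the two points hit, and your three dual moves (vertical, within $p_1^*$, along $p_1^*\cap p_2^*$) are exactly these motions, while your sidedness argument shows neither outlier set can grow, so the constrained variants $k_B=0$ and $k_R=0$ are covered too. One correction: all dual planes sharing a common direction means the primal points lie in a common \emph{vertical} plane (their $xy$-projections are collinear), not that they are collinear; your stated assumption of three dual planes with independent normals is exactly the right condition and excludes this case, and although the direction in step three is forced rather than chosen, it still works because otherwise every dual plane would contain the direction of $p_1^*\cap p_2^*$.
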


The next lemma considers (di)wedge classifiers.

\begin{lemma}
\label{lem:general_algo}
For any optimal solution $\wset_B(\plane_1,\plane_2)$ for a (di)wedge classification problem, there exists an optimal solution, $\wset_B(\plane_1',\plane_2')$, where each of $\plane_1'$ and $\plane_2'$ intersects 3 points with at least 1 red point and at least 1 blue point.
\end{lemma}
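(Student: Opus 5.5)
We use the same kind of continuous-motion argument as in the proof of Lemma~\ref{lem:3pointsforslice}, but applied to each bounding plane separately. The key observation is that, for a (di)wedge, moving one plane while the other is fixed only changes the classification of points near the moving plane. Fix an optimal $\wset_B(\plane_1,\plane_2)$. The two planes split $\rf^3$ into four open quadrants, and $\wset_B$ is one of them (wedge) or two opposite ones (diwedge).

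When $\plane_2$ is held fixed, whether a point $p$ lies in $\wset_B$ depends only on two things: the side of $\plane_2$ containing $p$, and the side of $\plane_1$ containing $p$. Within each open side of $\plane_2$, membership in $\wset_B$ is a halfspace condition with respect to $\plane_1$. So moving $\plane_1$ continuously changes the outlier count only when $\plane_1$ passes through a point. We treat points on a plane as non-outliers, matching the paper's convention that red outliers are interior and blue outliers lie in the interior of the complement.

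\textbf{Step 1.} Translate $\plane_1$ in the direction that shrinks $\wset_B$ on the side where it is bounded by $\plane_1$, until it touches a point. Optimality implies the count cannot decrease, so we stop at the first point hit. If that point is red, it leaves the interior harmlessly, and the count can only drop. Otherwise, translate in the opposite direction. We argue that $\plane_1$ can be made to touch some blue point and some red point. Indeed, we translate until a red point and rotate about it, as in Lemma~\ref{lem:3pointsforslice}, until a blue point is hit. Every intermediate position keeps all on-plane points as non-outliers, so the objective never increases.

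\textbf{Step 2.} Rotate $\plane_1$ about the line through the two contact points until it hits a third point. This rotation is a one-parameter motion keeping both contacts. During it, the count changes only when a new point is crossed, so stopping at the first hit preserves optimality. Rotating in either direction fails only if no point is ever hit. In that case all remaining points lie on one side for every rotation angle, so they are collinear with the axis, and we perturb.

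\textbf{Step 3.} Repeat Steps 1 and 2 for $\plane_2$ with $\plane_1'$ fixed. This does not disturb the contacts of $\plane_1'$.

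\textbf{Main obstacle.} The hard part is guaranteeing that each plane touches \emph{both} a red and a blue point. If one side of the planes is already clean of one color, the motion might only ever meet points of a single color. We first try the slice-style argument: after the first contact, we pivot toward points of the other color. A pivot reaching a same-color point either strictly keeps the count or lets us continue pivoting about the new point.

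There are degenerate situations where no red point (or no blue point) lies in the relevant quadrants, for instance $R=\emptyset$. In those cases the lemma can only be read as "whenever such points exist", and we would state that assumption explicitly.

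A further subtlety is the degenerate motion where the two planes become parallel. We would note that the resulting slice solution is handled by Lemma~\ref{lem:3pointsforslice}, or we would choose the rotation direction so that this never happens.
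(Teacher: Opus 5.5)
\textbf{Step 1 has a genuine gap.} Your Steps 2 and 3 coincide with the paper's argument: once $h_1$ passes through a red and a blue point, rotate it about the line through them until a third point is hit, then treat $h_2$ with $h_1'$ frozen. But Step 1 is the whole content of the lemma, and it is not established; your ``main obstacle'' paragraph concedes this.

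The invariant you invoke is the wrong one. That on-plane points stay non-outliers does not keep the objective from increasing, because the count changes when \emph{off-plane} points cross $h_1$. If you rotate about a single red contact $r$, another correctly excluded red point on the $h_2^+$ side can be swept into the interior of $\wset_B$ before any blue point is met. You must stop there, with two red contacts. ``Continue pivoting about the new point'' is then only asserted. You would need to show three things: the earlier contacts stay on the closed correct side, the process terminates, and it terminates at a point of the other color. In addition, ``rotate about a point'' is not a one-parameter motion in $\rf^3$, so a rotation axis has to be specified.

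\textbf{The missing idea is the hull-rolling argument the paper takes from Glazenburg et al.}
\begin{itemize}
\item With $h_2$ fixed, project onto a plane $V$ perpendicular to $h_1$. Then $h_1$ becomes a line $l_1$, and rotations of $l_1$ in $V$ are rotations of $h_1$ about axes parallel to a fixed direction in $h_1$.
\item Translate $l_1$ until it supports the convex hull of the blue points it currently classifies correctly.
\item Roll $l_1$ around that hull, switching pivots whenever a new hull vertex is reached.
\end{itemize}
The hull stays in the closed correct halfspace throughout, so no blue point becomes an outlier. Reaching an existing outlier would strictly improve the solution, contradicting optimality. Hence the first event that can stop the motion is hitting a red point, which gives the red--blue pair your Step 2 needs. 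As in the paper, this requires assuming that some red point is correctly classified, which you also propose to add.

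Finally, perturbing the instance in Step 2 when no third point is hit is not legitimate. In that case every point lies on the rotation axis and the instance is trivial, which is how the paper disposes of it.
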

\begin{proof}
Let $\plane_1$ and $\plane_2$ be the two planes defining an optimal solution to a (di)wedge classification problem that has at least one red point correctly classified. 
We first fix $\plane_2$ and solve for $\plane'_1$; finding $h'_2$ can be done similarly.  Initially, we let $h'_1=h_1$.
We project every point on the $\rf^3$ space into a plane $V$ that is perpendicular to $\plane'_1$ so that $\plane'_1$ becomes a line $l_1$ in $V$. 
Using the same argument from~\cite{Glazenburg24robust}, we can move $l_1$ down until it hits the convex hull of blue points and rotate $l_1$ around the convex hull until it hit a red point; thus, 
we obtain the intersection with the first 2 points of different colors as claimed.  After that, rotate $\plane'_1$ in $\rf^3$ about the line that is induced by these 2 points until it hit another point. 
If we did not hit another point, then clearly, the entire space only consists of the 2 points that we hit, in which case, the solution is trivial. 
\end{proof}

For all the cases, the fact that $\plane_1$ and $\plane_2$ can be identified by 3 points in the dual space, motivate us to find the solution in the dual space rather than the primal space.  The next lemma considers blue outlier minimization problems.

\begin{lemma}
\label{lem:red_cell}
For the classification problem that minimizes blue outliers, there exists a solution $\wset_B(\plane_1',\plane_2')$ which in the dual, the line segment $\overline{\plane_1'^*,\plane_2'^*}$ is inside a cell in $\arr(R^*)$ and both points $\plane_1'^*$,$\plane_2'^*$ lie on the surface of that cell.
\end{lemma}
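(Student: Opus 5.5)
The plan is to work entirely in the dual, as in Section~\ref{sect:dual-problem-formulation}, and use the constraint $k_R=0$. Fix an optimal solution $\wset_B(\plane_1,\plane_2)$ with $k_R=0$. The key observation is that whether a red point $r$ lies in the interior of $\wset_B(\plane_1,\plane_2)$ is determined only by the position of $\plane_1^*$ and $\plane_2^*$ relative to the dual plane $r^*$, i.e.\ above, below, or on it. For a slice, $r$ is strictly inside exactly when $r^*$ separates $\plane_1^*$ and $\plane_2^*$ strictly along the vertical segment. Similarly, for a wedge or diwedge the side of each bounding plane containing $r$ is read off from which side of $r^*$ each dual point lies on.

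The first step is to show that $\plane_1^*$ and $\plane_2^*$ can be taken in the closure of a common cell of $\arr(R^*)$, so the open segment $\overline{\plane_1^*\plane_2^*}$ crosses no red dual plane. For the slice, this is immediate: $k_R=0$ means no $r^*$ strictly separates the two points. For (di)wedges I would argue differently. Suppose some $r^*$ strictly separates them. Then either $r$ is misclassified, contradicting $k_R=0$, or we may flip the orientation convention of one plane, i.e.\ replace $\plane_i$ by the same plane with the complementary halfspace in the definition of $\wset_B$. I am not confident this flip argument is clean, so I would first check carefully whether the lemma is intended only for slices and halfplanes; in the wedge case it may need exactly this reinterpretation.

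The second step is to push both endpoints to the boundary of that cell without changing the classification of any blue point in a harmful way. Here I would use the canonical forms already proved. By Lemma~\ref{lem:3pointsforslice} and Lemma~\ref{lem:general_algo} there is an optimal solution in which each bounding plane passes through some points, including (for the wedge case) at least one red point. Dually, each of $\plane_1'^*,\plane_2'^*$ lies on at least one red plane, hence on the boundary of a cell of $\arr(R^*)$.

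For the slice case the canonical form only guarantees blue points on one plane. There I would instead move each endpoint monotonically along the vertical line, away from the other endpoint, until it hits a red plane, i.e.\ widen the slice. Widening can only add blue points, so $k_B$ does not increase, and no red point enters until the endpoint reaches a red dual plane. If no red plane is ever hit, the slice can be widened to contain every blue point, and any point on an unbounded cell boundary works.

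The main obstacle is the interaction between the two steps. When the endpoints are moved onto red planes, they must stay on the boundary of the \emph{same} cell and the segment must remain inside it. For slices, moving outward along the vertical line keeps the segment within the cell's vertical extent, so this holds, and that is the case I would write in full. For (di)wedges I would start from the configuration given by Lemma~\ref{lem:general_algo} and try to verify directly, using $k_R=0$, that no red dual plane separates the two endpoints.
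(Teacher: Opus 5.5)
For slices your argument is correct and is the paper's proof. $k_R=0$ means no red dual plane strictly separates $\plane_1^*$ from $\plane_2^*$, so the segment lies in one cell of $\arr(R^*)$. Pushing both ends outward until they reach red planes only enlarges the set of blue dual planes crossed. The paper states this once for all region types, extending $\overline{\plane_1^*\plane_2^*}$ along its own line rather than along a vertical one. This tacitly uses the convention of Section~\ref{sect:dual-problem-formulation}: a point is in the interior of $\wset_B$ exactly when its dual plane strictly separates the two endpoints. That holds for slices and for the diwedge formed by the two quadrants lying above one plane and below the other.

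Under that convention your \emph{main obstacle} is not one. Your first step applies verbatim to \emph{every} feasible solution, so any solution with $k_R=0$ has its segment in the closure of a single cell. An endpoint lying on a red plane is then automatically on that cell's boundary, since it lies in the closure but not in the open cell. Citing Lemma~\ref{lem:general_algo} would therefore finish the diwedge case, granting that lemma. The paper's extension along the segment's own line finishes it without any canonical form, and it preserves the region type, since a vertical line stays vertical and a non-vertical one stays non-vertical.

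What does not work is the orientation flip. Replacing $\plane_i$ by its complementary halfspace turns $\wset_B$ into a different region containing different blue points, and it leaves $\plane_i^*$ where it is. So it cannot remove a red plane that crosses the segment. Your hesitation is nonetheless justified. If $\wset_B$ is a single quadrant, say above $\plane_1$ and below $\plane_2$, a red point in the opposite quadrant has a dual plane crossing the segment without being an outlier. Hence for wedges, and for the other diwedge (above both planes or below both), neither your first step nor the paper's first sentence follows from $k_R=0$. The lemma is really a statement about slices, and its only use is in Section~\ref{sect:slice_blue}.

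Finally, your fallback for the unbounded case is wrong, and the paper overlooks this case as well. Take a single red point far below four non-coplanar blue points: the optimum is $k_B=0$. But a vertical segment with both ends on the single red dual plane is degenerate, i.e.\ $\plane_1=\plane_2$, and that leaves at least one blue outlier. The statement needs to allow an endpoint at infinity.
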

\begin{proof}
Let $h_1$ and $h_2$ be two planes that define the optimal solution $\wset_B(\plane_1,\plane_2)$.  
Consider their dual $h^*_1$ and $h^*_2$.  Note that the segment $\overline{h^*_1 h^*_2}$ does not intersect any red planes; i.e., it lies inside some cell in $\arr(R^*)$.
From $\overline{h^*_1 h^*_2}$, we can construct the claimed $\overline{h'^*_1 h'^*_2}$ by extending $\overline{h^*_1 h^*_2}$ on both ends until it reaches other planes.  The extended segment cannot hit a blue plane because it contradicts the optimality of $\wset_B(\plane_1,\plane_2)$; thus, both $h'^*_1$ and $h'^*_2$ are on some pair of red planes as claimed.
\end{proof}

The following lemma deals with red outliers.

\begin{lemma}
\label{blue_surface}
For the classification problem of red outliers $\wset(\plane_1,\plane_2)$, there exists an optimal solution $\wset_B(\plane_1',\plane_2')$ which in the dual space, $\plane_1^*$ lies on the surface of $\upe(B^*)$ and $\plane_2^*$ lies on the surface of $\lowe(B^*)$
\end{lemma}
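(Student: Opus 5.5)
We mirror the proof of Lemma~\ref{lem:red_cell}, with the roles of the two colours exchanged: there a segment was extended until it reached red planes, here we \emph{shrink} the dual segment until its endpoints reach the blue envelopes. Let $\plane_1,\plane_2$ define an optimal solution with $k_B=0$, so every blue point lies in $\wset_B(\plane_1,\plane_2)$. In the dual we orient the two points so that $\plane_1^*$ is the one lying on or above every blue plane, and $\plane_2^*$ the one lying on or below every blue plane. This orientation is exactly the dual form of the condition that all blue points lie on the correct side of both boundary planes, as in Table~\ref{table:dual-problems-halfplane}. Hence, before any modification, $\plane_1^*$ lies on or above $\upe(B^*)$ and $\plane_2^*$ lies on or below $\lowe(B^*)$.

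\textbf{Moving the endpoints.} We move $\plane_1^*$ monotonically toward $\plane_2^*$ until it first touches $\upe(B^*)$. For a slice the two points share a vertical line, so this motion is vertical; for a (di)wedge we move along the segment $\overline{\plane_1^*\plane_2^*}$. Symmetrically, we move $\plane_2^*$ toward $\plane_1^*$ until it first touches $\lowe(B^*)$. We call the resulting points $\plane_1'^*$ and $\plane_2'^*$. In the primal, each boundary plane is being pushed inward toward the blue points and stops at the first blue point it meets.

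\textbf{Feasibility and optimality.} We must check three things about the new region.
\begin{itemize}
\item[(i)] \emph{No blue outliers appear.} During the motion $\plane_1'^*$ never goes below any blue plane, because it stops at the upper envelope. Likewise $\plane_2'^*$ never goes above any blue plane. So every blue point still lies in $\wset_B(\plane_1',\plane_2')$.
\item[(ii)] \emph{The number of red outliers does not increase.} We argue that $\wset_B(\plane_1',\plane_2')\subseteq\wset_B(\plane_1,\plane_2)$. In the dual, a point lies in the region exactly when its dual plane crosses the segment between the two dual points (for a slice), or satisfies the analogous side condition (for a wedge). The new segment is contained in the old one, so any red plane crossing the new segment also crosses the old one. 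Hence every red outlier of the new solution was already a red outlier of the old one, and optimality is preserved.
\item[(iii)] \emph{The moving points actually reach the envelopes.} $\upe(B^*)$ is a convex surface, and the moving point starts above it and moves toward a point lying below $\lowe(B^*)$. Since $\lowe(B^*)$ is below $\upe(B^*)$, the point must cross $\upe(B^*)$, so by continuity it touches the envelope. The same argument applies to $\lowe(B^*)$. This uses that $B$ is nonempty.
\end{itemize}

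\textbf{Main obstacle.} The hardest step is the containment claim in (ii) for wedges and diwedges. There, shrinking the dual segment does not correspond simply to translating parallel planes, so we must verify that shrinking the segment shrinks the primal region. For this we would use the planar argument of~\cite{Glazenburg24robust}: project onto a plane perpendicular to the line $\plane_1\cap\plane_2$, where a wedge becomes a 2D wedge, and the inward rotation or translation of each bounding plane only removes points from $\wset_B$. If the region has nonempty interior throughout, the two points never cross during the motion, so the process is well defined.
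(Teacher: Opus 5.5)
Your argument is the paper's own: feasibility forces $\wset_B$ to contain the blue hull, so $\plane_1^*$ lies on or above $\upe(B^*)$ and $\plane_2^*$ on or below $\lowe(B^*)$, and one shrinks $\overline{\plane_1^*\plane_2^*}$ until its endpoints reach the envelopes. Your checks (i)--(iii) are a more careful version of the paper's one-line proof. They are correct for halfplanes, slices, and wedges of the form ``above $\plane_1$ and below $\plane_2$''.

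The step you call the main obstacle is in fact immediate. For fixed $p$, the signed vertical distance from a dual point $q$ to $p^*$ is affine in $q$. So if $p^*$ separates the endpoints of a sub-segment (in the same orientation), it also separates $\plane_1^*$ and $\plane_2^*$. Geometrically, points of the segment are duals of the pencil of planes through $\plane_1\cap\plane_2$ (for a slice, the planes parallel to both). Moving $\plane_1^*$ toward $\plane_2^*$ therefore just rotates $\plane_1$ about that line toward $\plane_2$, and no projection argument is needed. Likewise, because $\lowe(B^*)$ lies below $\upe(B^*)$, the two stopping points automatically come in the correct order along the segment.

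The genuine gap is your extension to diwedges, and it occurs at your first step, not in (ii). For the diwedge $(\text{above }\plane_1\text{ and below }\plane_2)\cup(\text{below }\plane_1\text{ and above }\plane_2)$, a point $p$ lies in $\wset_B$ iff $p^*$ separates $\plane_1^*$ from $\plane_2^*$ in \emph{either} direction. So there is no single ``correct side'' of each boundary plane, and nothing forces $\plane_1^*$ to lie above $\upe(B^*)$.

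In fact the conclusion is false for diwedges. Take blue points $(\pm 3,\pm 1,\pm 0.1)$ and $(2,0,0)$, and one red point $(0,0,0.05)$. The diwedge $|z|<0.05|x|$, bounded by $z=\pm 0.05x$, has no outliers. But in any solution of the lemma's form, all blue points are weakly above $\plane_1'$ and weakly below $\plane_2'$. Hence the open wedge strictly between them contains the interior of the blue hull, and with it both $(2,0,0)$ and $(0,0,0.05)$. If that wedge belongs to $\wset_B$, the red point is an outlier; if not, the blue point is. Either way the solution is not optimal.

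The paper's proof has the same defect: its first sentence, that every feasible region contains the convex hull of $B$, uses convexity of $\wset_B$. So the lemma, and your proof, should be restricted to convex regions. That restriction suffices for its uses in Sections~\ref{sect:halfplane-classifiers} and~\ref{sect:slice-classifiers}.
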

\begin{proof}
In the primal space, any valid solution to the classification problem of red outliers must contain the convex hull of blue points. In the dual space, this means that $\plane_1^*$ must lie on or above the $\upe(B^*)$ and $\plane_2^*$ must lie on or below the $\lowe(B^*)$. Using the same argument in \ref{lem:red_cell} and shrink the line segment $\overline{\plane_1^*\plane_2^*}$ until $\plane_1^*$ is on the surface of $\upe(B^*)$ and $\plane_2^*$ lies on the surface of $\lowe(B^*)$.
\end{proof}

\subsection{General algorithms}

In this section, we present general algorithms that can solve all problems based on the structures of optimal solutions presented in the previous sections.

For halfplane classification, Lemma~\ref{lem:cannonical-halfplane} simply gives an $O(n^4)$ algorithm by trying all possible 3 points defining the optimal plane; Section~\ref{sect:halfplane-classifiers} gives a faster implementation for red outliers and blue outliers.

For classification problems with two planes for any outliers, we present a general algorithm that can solve using Lemma~\ref{lem:3pointsforslice} and Lemma~\ref{lem:general_algo} that allow us to brute-force through all candidate solutions.  

From both lemmas, we know that an optimal solution must pass through at least one blue point and at least one red point. We claim that both points $\plane_1^*$ and $\plane_2^*$ in the dual space must lie at the intersection of one red plane, one blue plane, and one additional plane of any color.  
To see that, note that there exists an optimal solution where, in the dual space, the points $\plane_1^*$ and $\plane_2^*$ lie on a line formed by the intersection of one red plane and one blue plane. Let $l_1^*$ and $l_2^*$ be the lines that contain the points $\plane_1^*$ and $\plane_2^*$, respectively. We can move $\plane_1^*$ along $l_1^*$ such that it reaches an intersection point of three planes, since this movement does not create any new outliers. Applying the same movement to $\plane_2^*$, we have proved the claim.


There are at most $O(n^3)$ intersection points of three planes; this implies $O(n^6)$ pairs of intersection points. For each pair, we can calculate the number of outliers in $O(n)$ time, which would result in an $O(n^7)$ algorithm.  

However for wedge and (di)wedge classifiers, we can derive a faster algorithm as follows.  First, construct $\arr(B^* \cup R^*)$ in $O(n^3)$ time~\cite{10.5555/2408018}, and select two arbitrary points, $\plane^*_a$ and $\plane^*_b$, to form a line segment $\overline{\plane^*_a\plane^*_b}$. We can then calculate the outliers for this classifier $\wset_B(h_a,h_b)$ in $O(n)$ time.  We can then iterate $\plane^*_a$ through $\arr(B^* \cup R^*)$ in a depth-first search fashion. Note that  when we move from one intersection point to another, only a constant number of outliers can be created; therefore, we can update the information in $O(1)$ time.   We can iterate $\plane_2^*$ through $\arr(B^* \cup R^*)$ similarly to consider every pair of intersection points.  Since there are $O(n^6)$ pairs of points, and each update takes only constant time, we achieve an overall time complexity of $O(n^6)$.

\begin{theorem}
Given two sets of points in $\rf^3$, $R$ and $B$, each containing at most $O(n)$ points, we can construct a classifier $\wset_B$ that is a (di)wedge, minimizing either $k_B$, $k_R$, or $k$ in $O(n^6)$ time.
\end{theorem}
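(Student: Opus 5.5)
By Lemma~\ref{lem:general_algo}, it suffices to search over a finite candidate set. There is an optimal (di)wedge $\wset_B(\plane_1,\plane_2)$ in which each bounding plane passes through three input points, at least one red and at least one blue. In the dual, this means each of $\plane_1^*$ and $\plane_2^*$ is a vertex of the arrangement $\arr(B^*\cup R^*)$. That arrangement has $O(n^3)$ vertices, so there are $O(n^6)$ candidate pairs $(\plane_1^*,\plane_2^*)$. For each pair there are only $O(1)$ choices of wedge (one of four quadrants) or diwedge (one of two opposite pairs), and for each choice we take the better orientation. The whole task is therefore to evaluate $k_R$, $k_B$ and $k$ for every candidate pair in $O(1)$ amortized time. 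For the constrained variants ($k_B=0$ or $k_R=0$) we simply discard pairs whose count of the forbidden type is positive.

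The plan for evaluating all pairs is as follows.
\begin{enumerate}
\item Build $\arr(B^*\cup R^*)$ with its incidence structure (vertices, edges, adjacency) in $O(n^3)$ time.
\item Fix a spanning walk of the vertex graph of the arrangement, for instance a DFS order in which consecutive vertices are adjacent along an edge. Backtracking at most doubles the walk, so its length is $O(n^3)$.
\item For each point $p\in R\cup B$, the classification of $p$ by $\wset_B(\plane_1,\plane_2)$ depends only on the two side-bits (above or below) of $p$ with respect to $\plane_1$ and to $\plane_2$. In the dual these are exactly which side of the plane $p^*$ each of $\plane_1^*$ and $\plane_2^*$ lies on.
\item Maintain counters indexed by (color, side w.r.t.\ $\plane_1$, side w.r.t.\ $\plane_2$). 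From these counters, the outlier counts for every (di)wedge orientation are sums of $O(1)$ counters.
\item Run an outer walk for $\plane_1^*$. At each position run the full inner walk for $\plane_2^*$. Initialize each inner walk in $O(n)$ time; over all outer steps this costs $O(n^4)$, which is dominated.
\end{enumerate}

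The step needing care is the update bound: when $\plane_a^*$ moves from a vertex to an adjacent vertex along an edge, only the planes through the two endpoints can change their side relation. Under general position (each vertex lies on exactly three planes, and adjacent vertices share an edge lying on two of them), at most $O(1)$ planes change status, and we can adjust the counters in $O(1)$ time. Points lying on a plane count as correctly classified, consistent with the definitions using interiors. Degenerate inputs are handled either by symbolic perturbation or by charging each update to the incidences of the endpoint vertices. The total number of incidences summed over the walk is $O(n^3)$ in the general case, so the amortized cost stays constant per step.

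Summing up, construction costs $O(n^3)$, inner-walk initializations cost $O(n^4)$, and the $O(n^3)\cdot O(n^3)$ walk steps each cost $O(1)$, for a total of $O(n^6)$. Correctness follows because every candidate pair guaranteed by Lemma~\ref{lem:general_algo} is visited. The main obstacle is the constant-time update argument, namely showing that a single arrangement step changes only $O(1)$ sidedness relations. I would first prove it under general position and then invoke standard perturbation to drop that assumption.
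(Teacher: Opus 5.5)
Your proposal is essentially the paper's proof: Lemma~\ref{lem:general_algo} reduces the search to $O(n^6)$ pairs of vertices of $\arr(B^*\cup R^*)$, and nested depth-first traversals of the arrangement with $O(1)$ counter updates per step give $O(n^6)$ overall; your explicit $O(n^4)$ charge for re-initializing the inner walk makes precise a detail the paper leaves implicit. One small correction: the rule that points on a bounding plane are always correctly classified holds for diwedges but not for wedges, since a blue point on $\plane_1$ lying strictly outside the halfspace of $\plane_2$ is a blue outlier, so you should index your counters by the three-valued status above/on/below with respect to each plane rather than by two side-bits.
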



\section{Halfplane Classifier}
\label{sect:halfplane-classifiers}
  
In this section, we present a faster algorithm that finds $\wset_B$ for halfplane classifiers for minimizing red outliers and for minimizing blue outliers.  (See Table~\ref{table:dual-problems-halfplane} for primal and dual problems in this case.)
We only present an algorithm for minimizing red outliers $k_R$; the algorithm for minimizing blue outliers $k_B$ is symmetric.

In the dual, a plane $h$ defining a halfplane classifier corresponds to a point $h^*$. 
From Lemma~\ref{blue_surface}, we know that this point must lie on $\upe(B^*)$. 
Let $r^* \in R^*$ be a red plane in the dual space. 
Consider an intersection of $r^*$ and $\upe(B^*)$ with the $xz$ plane or $yz$ plane.
The result is a plane with red lines and the cross-section of the $\upe(B^*)$.
Each red plane is considered to be a 
\textit{stabbing} plane if inside the intersection plane, the red line that this plane created intersects $\upe(B^*)$ twice,
\textit{splitting} if it intersects $\upe(B^*)$ once, and 
\textit{floating} if it does not intersect $\upe(B^*)$. 

If a plane is splitting, it will create a straight line on $\upe{U}(B^*)$; if it is stabbing, a convex region; 
and if the plane is floating, nothing. 
For a floating plane, we can do nothing to it, since in the primal space this corresponds to a point inside the convex hull of $B$, and we always misclassify this point for a red outlier classification.

To find the optimal plane $h^*$ in the dual, we first construct $\upe(B^*)$ and all its intersections with the red planes.
We can find $\upe(B^*)$ in $O(n^3)$ time by using the fact that $\upe(B^*)$ is created by only the points from the upper hull in the primal space, which can be found in $O(n \log n)$ \cite{convexHull}.  
Since the upper hull has $O(n)$ complexity, then $\upe(B^*)$ can be created in $O(n^3)$ by simply iteratively computing the intersection of corresponding dual planes. 
To deal with red planes, we create $O(n)$ lines and regions induced by the splitting and stabbing planes in $O(n^2)$ time by incrementally intersecting each red plane with $\mathcal{U}(B^*)$.

To find $h^*$, note that $h^*$ belongs to some point on the intersection of $\upe(B^*)$ and the red planes.  We thus traverse all the cells to find it.  We start with an arbitrary vertex in $\upe(B^*)$ and calculate the number of outliers in $O(n)$ time. Then, in a depth-first search fashion, we traverse all the edges $\upe(B^*)$.  Each time we come across a vertex that has a red plane intersecting it, we update the number of outliers based on the slope of the red plane that intersects this edge.  This can be done in constant time for each intersection.
There are $O(n^3)$ vertices and $O(n^3)$ event points in $\mathcal{U}(B^*)$; thus, the algorithm runs in $O(n^3)$ time.

\begin{theorem}
Given two sets of points in $\rf^3$, $R$ and $B$, each containing at most $O(n)$ points, we can construct a classifier $\wset_B$ that is a halfplane and minimizes either $k_B$ or $k_R$ in $O(n^3)$ time.
\end{theorem}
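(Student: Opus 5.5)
We prove the theorem through the dual formulation in Table~\ref{table:dual-problems-halfplane}, treating only the red-outlier problem; the blue-outlier problem is symmetric. We assume $\wset_B$ is the halfspace above $h$; the ``below'' case follows by flipping all points vertically and taking the better of the two answers. In the dual we seek a point $h^*$ on or above $\upe(B^*)$ that minimizes the number of red planes below $h^*$.

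\textbf{Step 1 (reduce to the envelope).} By Lemma~\ref{blue_surface}, restricted to a single plane, some optimal $h^*$ lies on $\upe(B^*)$. The argument is that lowering a feasible point vertically toward the envelope can only decrease the number of red planes below it. By Lemma~\ref{lem:cannonical-halfplane}, we may further assume $h^*$ is a vertex of the arrangement induced on $\upe(B^*)$ by the red planes. Such a vertex is either a vertex of $\upe(B^*)$, or a point where a red line crosses an edge of $\upe(B^*)$, or the meeting point of two red lines on a facet. Since the count is piecewise constant on the overlay, restricting to these vertices loses nothing.

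\textbf{Step 2 (build the structure).} We compute the upper hull of $B$ in $O(n\log n)$ time. Its dual $\upe(B^*)$ is a convex polyhedral surface of complexity $O(n)$, which we store in a DCEL. We then insert each red plane $r^*$. Its trace $r^*\cap\upe(B^*)$ is empty (floating), a single polygonal line (splitting), or a closed convex curve (stabbing), and it can be traced across the DCEL facets in $O(n)$ time. Overlaying all $O(n)$ traces yields a planar subdivision of $\upe(B^*)$ of complexity $O(n^2)$ from pairs of red traces crossing on a facet, plus $O(n^2)$ from traces crossing edges. The whole overlay can therefore be built within $O(n^3)$ time, even by naive incremental insertion with $O(n^2)$ work per red plane.

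\textbf{Step 3 (traverse).} We pick a starting vertex and count the red planes below it in $O(n)$ time. We then traverse the overlay graph by depth-first search. Crossing a red trace changes the count by exactly $\pm1$, with the sign determined by which side of $r^*$ we enter. Along the way we record the minimum count seen at any vertex. The number of steps is linear in the overlay complexity, which is $O(n^3)$ in total, and this gives the stated bound.

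\textbf{Main obstacle.} The delicate step is bounding the overlay complexity and making the incremental construction actually run in $O(n^3)$ time. In particular, each red plane must be traced efficiently across the DCEL, and degeneracies must be handled where traces pass through vertices of $\upe(B^*)$ or where several red planes are concurrent. My first attempt would be a general-position assumption with symbolic perturbation, together with walking each trace facet by facet so that updates stay at constant cost per crossing.
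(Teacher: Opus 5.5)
Your proposal is correct and is essentially the paper's argument: dualize, push $h^*$ down onto $\upe(B^*)$ as in Lemma~\ref{blue_surface}, build the envelope from the convex hull of $B$, cut it with the traces of the red planes, and run a depth-first traversal with $O(1)$ count updates per step. Your accounting is sharper than the paper's, which counts $O(n^3)$ vertices: the envelope has $O(n)$ complexity and the overlay only $O(n^2)$, because the line $r_1^*\cap r_2^*$ meets the convex surface $\upe(B^*)$ at most twice, so the traversal itself is $O(n^2)$ and the $O(n^3)$ bound comes only from the naive construction; one small correction, which the paper shares, is that under the duality $p^*: z=ax+by-c$ with $\wset_B$ above $h$, it is the \emph{lower} hull of $B$, not the upper hull, that determines $\upe(B^*)$.
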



\section{Slice Classifier}
\label{sect:slice-classifiers}

Recall that a region $\wset_B$ defined by a slice classifier is a region between two parallel planes $h_1$ and $h_2$.  

In this section, we show an algorithm for each case of outliers.  The key observation is that $\wset_B$ is a straight line vertical segment in the dual, i.e., its end points $\plane_1^*$ and $\plane_2^*$ share the same $x$ and $y$ coordinates. 

\subsection{Red Outliers}
\label{sect:slice_red}
From Lemma~\ref{blue_surface}, we know that $\plane_1^*$ must lie on $\upe(B^*)$ and $\plane_2^*$ must lie on $\lowe(B^*)$.
Therefore, we use a sweeping plane to find $\overline{\plane_1^*\plane_2^*}$. 
 
Let $x^*,y^*$ and $z^*$ be the three axes of the dual space. 
We set the sweeping plane to be initially $x^*=-\infty$; and sweep through the $x^*$-axis.  The intersection of all planes inside the dual and the sweeping plane create an arrangement of lines $\arr_{x^*}(B^* \cup R^*)$ in 2 dimensions. 
If the solution is on the current sweeping plane, $\plane_1^*\in \upe_{x^*}(B^*)$ and $\plane_2^*\in\lowe_{x^*}(B^*)$; hence, we only need to find the $y^*$ coordinate of segment $\overline{h^*_1h^*_2}$ using the 2-dimensional arrangement $\arr_{x^*}(B^*\cup R^*)$.  Figure~\ref{fig:sweeping-plane-3d} illustrates an example of a sweeping plane $x^*=1$ with 2 blue dual planes and 2 red dual planes.  Their intersections with the sweeping plane are also shown as thick lines.

Figure~\ref{fig:sweeping-plane-3d} demonstrates the duality approach to our problems.

\begin{figure}
\centering
    \includegraphics[width=2in]{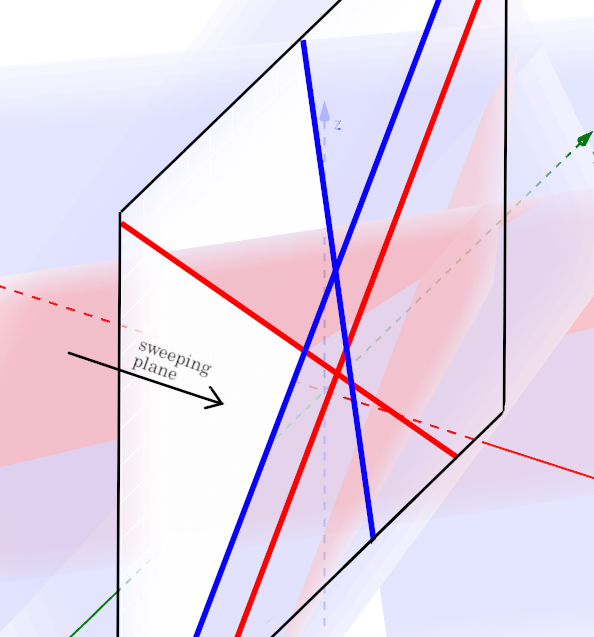}
    \caption{Sweeping plane $x^*=1$ and dual red and blue planes.  Red and blue lines are the intersections of these planes onto the sweeping plane. }
    \label{fig:sweeping-plane-3d}
\end{figure}

As in Section~\ref{sect:halfplane-classifiers},
we classify a red line $r^* \in \arr_{x^*}(B^*\cup R^*)$, corresponding to the intersection of the red plane with the sweeping plane, into three types:
\begin{itemize}
    \item Type \textit{splitting}: 
    $r^*$ intersects both $\upe_{x^*}(B^*)$ and $\lowe_{x^*}(R^*)$. Let $y^*_1$ and $y^*_2$ be the $y^*$ coordinate where that $r^*$ intersects $\upe_{x^*}(B^*)$ and $\lowe_{x^*}(B^*)$ respectively. 
    This means that if our solution lies inside the range $[y^*_1,y^*_2]$, then $r^*$ will always be misclassified.
    \item Type \textit{floating}: 
    $r^*$ does not intersect neither $\mathcal{U}_{x^*}(B^*)$ nor $\mathcal{L}_{x^*}(R^*)$. 
    In the primal space, $r^*$ is inside the convex hull of blue points and always gets misclassified; so we can ignore it.
    \item Type \textit{stabbing}: 
    $r^*$ intersects either one of $\upe_{x^*}(B^*)$ or $\lowe_{x^*}(R^*)$. Note that $r^*$ intersect either $\upe_{x^*}(B^*)$ or $\lowe_{x^*}(R^*)$ only twice;
    let $y^*_1$ and $y^*_2$ be the $y^*$ coordinate of that $r^*$ intersect $\upe_{x^*}(B^*)$ or $\lowe_{x^*}(B^*)$.
    This creates two ranges, $[-\infty,y^*_1]$ and $[y^*_2,\infty]$, in which our solution, when lies inside one of these two ranges, misclassifies $r^*$.
\end{itemize}

As we sweep through the dual space, the problem becomes similar to the line segment intersection, but the key difference is that the ranges are constantly moving. 

Since we are only interested in what interval of $y^*$ yields the minimum intersection of the forbidden ranges, the fact that the ranges are constantly moving only means that there are more event points generated from each range.
There can only be at most $O(n^3)$ vertices in the $\arr(B^* \cup R^*)$ and the event point where two ranges interact (start or stop overlapping each other) is a vertex on the $\mathcal{U}(B^* \cup R^*)$, so the total event points is at most $O(n^3)$.

We set up the event queue by sorting $O(n^3)$ vertices in $O(n^3 \log n)$ time and maintain a priority queue of ranges (sorted by its starting point).
At each event point, we update the priority queue based on the slope of the red plane that intersect the event point. 
This takes at most $O(\log n)$ time. So the total time complexity for this algorithm is $O(n^3 \log n)$. 

\begin{theorem}
Given two sets of points in $\rf^3$, $R$ and $B$, of at most $O(n)$ points each, we can construct a classifier $\wset_B$ that is a slice that minimizes $k_R$ in $O(n^3logn)$ time.
\end{theorem}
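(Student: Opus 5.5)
We plan to prove the theorem by combining Lemma~\ref{blue_surface} with a plane sweep in the dual along the $x^*$-axis. The sweep turns the search for a vertical segment $\overline{\plane_1^*\plane_2^*}$ into a one-dimensional covering problem in $y^*$ on each sweep position.

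\textbf{Reduction to a 2D problem on each sweep plane.} By Lemma~\ref{blue_surface}, some optimal slice has $\plane_1^*\in\upe(B^*)$ and $\plane_2^*\in\lowe(B^*)$. Since a slice is a vertical segment in the dual, both endpoints share $(x^*,y^*)$. So on the sweep plane $x^*=t$ a candidate is determined by a single value $y^*$: the segment runs from $\lowe_{t}(B^*)$ to $\upe_{t}(B^*)$ at that $y^*$. A red dual line $r^*$ on the sweep plane is a red outlier exactly when it crosses this segment. For each red line we therefore compute the set of $y^*$ values where it lies between $\lowe_t(B^*)$ and $\upe_t(B^*)$. Because $\upe_t$ is convex and $\lowe_t$ is concave, this set is one interval or the complement of an interval, according to the splitting/floating/stabbing classification above. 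Floating lines are counted once as unavoidable outliers. The optimum on the current sweep plane is the $y^*$ covered by the fewest such ranges.

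\textbf{Events and combinatorial invariance.} Between events, the endpoints of every range move continuously, and so does the sorted order of all range endpoints along $y^*$. That order changes only at three kinds of event: the sweep passes a vertex of $\upe(B^*)$ or $\lowe(B^*)$; a red plane starts or stops meeting an envelope edge (a red-plane/envelope-edge vertex, which can change a line's type); or two range endpoints swap. Every such event is a vertex of $\arr(B^*\cup R^*)$, so there are $O(n^3)$ of them. Between events the minimum coverage depth is constant, and it is attained at some elementary interval between consecutive endpoints. Hence it suffices to evaluate the depth once after each event.

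\textbf{Data structure and running time.} We sort all $O(n^3)$ vertices by $x^*$ in $O(n^3\log n)$ time. We keep the current range endpoints in a balanced search tree keyed by their current $y^*$ order, augmented with a coverage-minimum: each endpoint stores $+1$ or $-1$, and each node stores the subtree sum and the minimum prefix sum. At each event only $O(1)$ endpoints are swapped, inserted, or deleted, so the update and the refreshed global minimum cost $O(\log n)$. This gives $O(n^3\log n)$ overall, and we record the best $(t,y^*)$ seen.

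\textbf{Main obstacle.} The hardest step is arguing that the optimum is always witnessed at, or just after, one of these $O(n^3)$ events, and that each event changes only $O(1)$ endpoints. Otherwise some red lines, such as those hitting the envelopes near a high-degree vertex, might force many updates at once. To handle this we would appeal to general position, so that each arrangement vertex involves exactly three planes, and charge each update to one vertex of $\arr(B^*\cup R^*)$.
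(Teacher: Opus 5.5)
Your plan follows the paper's route: Lemma~\ref{blue_surface}, an $x^*$-sweep, a one-dimensional depth problem in $y^*$, and $O(\log n)$ per event. Your prefix-minimum search tree is a sharper version of the paper's ``priority queue of ranges''.

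The gap is the claim that every change in the $y^*$-order of range endpoints happens at a vertex of $\arr(B^*\cup R^*)$. This is false when an endpoint lying on $\upe(B^*)$ swaps with an endpoint lying on $\lowe(B^*)$. Such a swap at $(t,y)$ means two things happen over the same point $(t,y)$ of the $x^*y^*$-plane. A red plane $r_1^*$ meets $\upe(B^*)$, on a face $b_1^*$. A \emph{different} red plane $r_2^*$ meets $\lowe(B^*)$, on a face $b_2^*$. These are two distinct points of $\rf^3$, and each lies on only two planes, so neither is a vertex. The event time is the crossing of the $x^*y^*$-projections of the skew lines $r_1^*\cap b_1^*$ and $r_2^*\cap b_2^*$, which is determined by four planes.

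These events matter. They are exactly where the $y^*$-range on which $r_1^*$ passes above the dual segment starts or stops overlapping the range on which $r_2^*$ passes below it. In other words, they are where $r_1$ and $r_2$ become, or cease to be, simultaneously correctly classified. You process only the pre-sorted arrangement vertices, so these swaps are never executed. Your tree's order then diverges from the true $y^*$-order, and the reported minimum can be wrong. The paper's own argument makes the same assertion about when ranges ``interact'', so it has the same hole.

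The $O(n^3\log n)$ bound survives if you schedule events kinetically:
\begin{itemize}
\item Between events each endpoint moves linearly in $t$ along a single line $r^*\cap b^*$.
\item As in Bentley--Ottmann, keep in a priority queue the next crossing time of every pair of neighbours in your tree.
\item At each event, refresh the $O(1)$ affected certificates. This includes the case where an endpoint's trajectory bends at an RBB vertex on an envelope.
\end{itemize}
To count the new events, fix a red plane $r^*$. The projection of $r^*\cap\upe(B^*)$ is the boundary of the convex set of $(x^*,y^*)$ where $r^*$ lies on or above $\upe(B^*)$, and this boundary has at most $|B|$ edges. Likewise, $r^*\cap\lowe(B^*)$ projects to the boundary of a convex set with at most $|B|$ edges. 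Two such convex polygonal curves cross $O(n)$ times, so there are $O(n^3)$ upper--lower swaps over all pairs of red planes. Add the $O(n^2)$ arrangement vertices that lie on the two envelopes, and you get $O(n^3)$ events at $O(\log n)$ each. The real obstacle is this missing family of events, not the high-degree vertices your last paragraph worries about.
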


\subsection{Blue Outliers}
\label{sect:slice_blue}
From Lemma \ref{lem:red_cell}, we know that our solution is a vertical line $\overline{\plane^*_1\plane^*_2}$ in the dual space such that each endpoint lies on the surface of a cell in $\arr(R^*)$. 
To find the optimal solution, we sweep through $\arr(B^* \cup R^*)$ with a sweeping plane $x^*$ as in the previous section.
During the sweeping process, we would like to find a vertical line $\overline{\plane^*_1\plane^*_2}$ inside a red cell that intersects the maximum number of blue planes.

While in the red outlier case, we deal with moving intersections of lines with the upper and lower hulls, in this case, we deal with changing red cells in $\arr_{x^*}(R^*)$ and moving blue lines in $\arr_{x^*}(B^*)$.
To do so, note that the intersections of dual planes on the sweeping plane $x^*$ is of 2 dimensions; therefore, we use a DCEL data structure to store $\arr_{x^*}(B^* \cup R^*)$ and maintain an active set of red cells,
i.e., those red cells that intersects with the sweeping plane, as faces in the DCEL,
and use the vertices in $\arr(B^* \cup R^*)$ as event points. 
We define four types of vertices, representing various types of event points: 
\begin{itemize}
    \item \textit{RRR} vertices, created by intersections of three red planes.
    \item \textit{RRB} vertices, created by intersections of two red planes and one blue plane.
    \item \textit{RBB} vertices, created by intersections of one red plane and two blue planes.
    \item \textit{BBB} vertices, created by intersections of three blue planes.
\end{itemize}
We update the DCEL according to the type of the event point as follows:
\begin{itemize}
    \item \textit{RRR}: We swap three nodes in the DCEL by selecting an arbitrary node and viewing the other nodes as a supernode. 
    Then, we swap the two nodes, and inside the supernode, we swap the two internal nodes.
    \item \textit{RRB} and \textit{RBB}: We update the number of blue lines inside each face for all faces adjacent to the event point.
    \item \textit{BBB}: In this case, three blue lines swap places, but no lines leave or enter a cell, so we don't need to update the DCEL but we  update the data structure inside each face instead.
\end{itemize}

For each face in the active set, an arbitrary blue line $b_l^*$ that intersect this face at $y^*_1$ and $y^*_2$ will create a range $[y^*_1,y^*_2]$. If $\overline{\plane_1^*\plane_2^*}$ lies inside this range, then the blue plane $b^*$ intersect $\overline{\plane_1^*\plane_2^*}$. 
This means we correctly classified a blue point $b$. We use the same approach as \ref{sect:slice_red} to maintain a correct presentation of ranges but this time we are interested in the range that overlaps the most blue ranges instead of the least.

We show that sweeping through $\mathcal{A}(B^* \cup R^*)$ and updating the DCEL data structure takes, in total, $O(n^3 \log n)$ time, thereby yielding an $O(n^3 \log n)$ algorithm to solve the slice classification for blue outliers.

As an additional data structure, we triangulate all active faces within the sweeping plane and consider an arbitrary face $F$ inside the active set. 
Notice that this face has three vertices. Thus, $F$ must update the number of blue lines inside it at most three times for each blue line that leaves or enters it. 
Each time we update $F$, it takes another $O(\log n)$ time for maintaining a priority queue.  There are $O(n)$ blue lines inside the active set, so updating a face takes $O(n \log n)$ time. 
Since there are $O(n^2)$ faces, the total time for updating the sweeping plane is $O(n^3 \log n)$. Sorting the event points takes $O(n^3 \log n)$. Thus, we achieve the $O(n^3 \log n)$ algorithm as claimed.

\begin{theorem}
Given two sets of points in $\rf^3$, $R$ and $B$, each containing at most $O(n)$ points, we can construct a classifier $\wset_B$, which is a slice that minimizes $k_B$, in $O(n^3 \log n)$ time.
\end{theorem}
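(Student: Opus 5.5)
The theorem follows from the sweep just described: we combine Lemma~\ref{lem:red_cell} with a sweep of $\arr(B^*\cup R^*)$ by the plane $x^*=t$. The claim is that the maximum over all admissible vertical segments of the number of blue dual planes stabbed is attained at some sweep position. We also show that the data structures can be maintained over all events in total time $O(n^3\log n)$.

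\textbf{Correctness.} By Lemma~\ref{lem:red_cell}, we may assume that the optimal dual segment $\overline{\plane_1^*\plane_2^*}$ lies inside a single cell of $\arr(R^*)$, with both endpoints on red planes. For a slice classifier this segment is vertical, so it lies in the sweep plane $x^*=t$ for $t$ equal to the common $x^*$-coordinate of its endpoints. Within that sweep plane the segment is a vertical segment inside a face of $\arr_{x^*}(R^*)$, spanning the face from its lower red boundary to its upper red boundary. It stabs exactly those blue lines that cross the face's vertical chord at that $y^*$.

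Hence the optimum equals
\[
\max_t \max_{F} \max_{y^*} \#\{\text{blue lines crossing the vertical chord of } F \text{ at } y^*\}.
\]
The combinatorial structure changes only at vertices of $\arr(B^*\cup R^*)$, which fall into the four types RRR, RRB, RBB and BBB. Between consecutive events, the inner maximum is attained at a point where the depth of the ranges changes. So it suffices to evaluate the maximum after every event. I would also check the degenerate cases: unbounded faces, and the case where the optimum occurs at $t=\pm\infty$, which is handled by an initial evaluation of the leftmost arrangement.

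\textbf{Running time.} Sorting the $O(n^3)$ vertices as event points takes $O(n^3\log n)$. For each face in the active set we keep the blue ranges $[y_1^*,y_2^*]$ in a balanced tree (segment tree) supporting insertion, deletion and a maximum-depth query in $O(\log n)$. At an RRR event the DCEL changes by $O(1)$ edges, and at a BBB event only the order of ranges changes locally. RRB and RBB events each insert or delete $O(1)$ blue ranges in $O(1)$ adjacent faces. Each event therefore costs $O(\log n)$ amortized.

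\textbf{Main obstacle.} The delicate step is RRR and RRB events, where a red face is born, dies or merges. A dying face can carry $\Theta(n)$ blue ranges that must be transferred. I would argue, by charging, that each blue line enters or leaves a given triangulated face a constant number of times, as in the triangulation argument above. The total number of range insertions and deletions is then $O(n)$ per face over $O(n^2)$ faces, i.e.\ $O(n^3)$, each costing $O(\log n)$. If that charging fails, the fallback is to bound range transfers by the number of RBB vertices, which is $O(n^3)$; this gives the same $O(n^3\log n)$ total.
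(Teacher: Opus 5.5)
Your plan follows the paper's own route: Lemma~\ref{lem:red_cell}, an $x^*$-sweep, blue ranges kept per face, and a charging bound. But the step carrying the whole argument is unjustified. That step is the claim that ``the combinatorial structure changes only at vertices of $\arr(B^*\cup R^*)$'', from which you conclude that a per-face tree keyed by $y^*$ stays valid between events and that it suffices to evaluate after each event.

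The endpoints of the ranges in a face $F$ lie on $\partial F$, some on its upper chain and some on its lower chain. Two endpoints on the same chain can swap $y^*$-order only by meeting, which happens at an RBB vertex. An endpoint on the upper chain and one on the lower chain behave differently. The first is a point of the line $b^*\cap r^*$ and the second a point of $b'^*\cap r'^*$; they swap order whenever a vertical segment inside the red cell joins these two skew lines. That moment is not a vertex of the arrangement and is not in your event queue.

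At such a moment a right endpoint can pass a left endpoint, so the maximum depth in $F$ changes strictly between consecutive events. Your tree then becomes stale, and ``evaluate after every event'' can miss the optimum. The region of maximum depth of a red cell, projected to the $(x^*,y^*)$-plane, can be a convex polygon all of whose vertices are such floor--ceiling coincidences. For example, in a slab cell take four blue planes whose strips $\{\text{floor}\le b^*\le\text{ceiling}\}$ meet in a quadrilateral whose sides alternate between floor traces and ceiling traces. Nothing in your argument puts an event inside that polygon's $x^*$-extent.

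You cannot repair this just by adding these coincidences as events, because their number is not bounded by the $O(n^3)$ vertices. Let the blue planes be tiny vertical translates of one plane $b_0^*$, with offsets $i\delta$, $i=1,\dots,n$, perturbed into general position. For every pair $i<j$, the floor trace of plane $i$ crosses the ceiling trace of plane $j$ near each point where $b_0^*\cap C$ passes from the floor of a red cell $C$ to its ceiling. A steep $b_0^*$ among nearly horizontal red planes does this in $\Theta(n^2)$ cells, which gives $\Theta(n^4)$ order changes. So a genuinely new idea is needed for $O(n^3\log n)$. The paper's write-up makes the same implicit assumption (it reuses the red-outlier claim that ranges interact only at vertices), so mirroring it does not close the gap.

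Separately, your main charging step is wrong but your fallback is right, with one correction. Over the whole sweep there are $\Theta(n^3)$ red faces (the cells of $\arr(R^*)$), not $O(n^2)$. A face is born or dies by collapsing at an RRR vertex, where in general position no blue line meets it, so no ranges are ever transferred. A blue line enters or leaves a face only when it passes a vertex of that face, which is an RRB vertex, not an RBB one. That bounds all insertions and deletions by $O(n^3)$.
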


\subsection{Both Outliers}


For this case of outliers, recall that Lemma~\ref{lem:3pointsforslice} has two possible cases.  Our algorithm specifically works on each case separately.
The first part of the algorithm deals with the first case of Lemma~\ref{lem:3pointsforslice}, i.e., when one of the plane intersects at least 1 blue point and at least 1 red point and the other plane intersects at least 1 blue point and 1 point of any color.  

The two planes $\plane_1$ and $\plane_2$ are mapped to two points $\plane^*_1$ and $\plane^*_2$ with both of them lies inside some lines, that is an intersection of two planes inside $\arr(B^* \cup R^*)$. 
Since we are looking for a slice classifier, we know that $\plane_1^*$ and $\plane_2^*$ share $x$ and $y$ coordinate. Thus, we project every line in $\arr(B^* \cup R^*)$ to the $xy$ plane.   Under this projection, a slice classifier becomes an intersection point of two lines. 

There are $O(n^2)$ lines inside $\arr(B^* \cup R^*)$; thus, the projection also has $O(n^2)$ lines. 
We can bound the number of intersection points of two lines with $O(n^4)$. An intersection point in the projection corresponds to two point in $\arr(B^* \cup R^*)$ denote by $v_1$ and $v_2$, they share the same $x$ and $y$ coordinate with the intersection point and can be distinguished by the $z$ coordinate that comes from the one of the lines that intersect the intersection point. 
We iterate through all $O(n^4)$ intersection points. For each intersection point, we create a vertical line segment $\overline{v_1v_2}$ in $\arr(B^*\cup R^*)$. We then calculate $k$, the number of outliers, in $O(n)$ time, so we achieve an algorithm with overall running time of $O(n^5)$.

For the other case of Lemma~\ref{lem:3pointsforslice}, one of the endpoints of $\overline{\plane^*_1\plane^*_2}$ is located at one of the vertices inside $\arr(B^* \cup R^*)$ and the other endpoint lies on some blue plane. 
There are at most $O(n^3)$ vertices and $O(n)$ blue planes inside $\arr(B^* \cup R^*)$, the total of$O(n^4)$ possible vertical line segments, determined by drawing a vertical line from the vertex to that blue plane. 
We can calculate the number of outliers $k$ in $O(n)$ time for each vertical line segment, resulting in an overall running time of $O(n^5)$.

\begin{theorem}
Given two sets of points in $\rf^3$, $R$ and $B$, each containing at most $O(n)$ points, we can construct a classifier $\wset_B$, which is a slice that minimizes $k$, in $O(n^5)$ time.
\end{theorem}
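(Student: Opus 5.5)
The plan has three steps: reduce to a finite candidate set with Lemma~\ref{lem:3pointsforslice}, enumerate each of its two cases in $O(n^5)$, and show that no other slices need to be considered.

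\emph{Reduction.} Start from an optimal slice $\wset_B(h_1,h_2)$. By Lemma~\ref{lem:3pointsforslice} there is an optimal slice $\wset_B(h_1',h_2')$ in one of two canonical forms. Translate each form to the dual. A primal plane through a point $p$ is a dual point lying on the plane $p^*$. So a plane through two points becomes a dual point on a line of $\arr(B^*\cup R^*)$, and a plane through three points becomes a vertex of the arrangement. Since the planes are parallel, $h_1'^*$ and $h_2'^*$ share their $x^*$ and $y^*$ coordinates. The slice is therefore a vertical dual segment.

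\emph{First case.} Both endpoints lie on lines of the arrangement, and there are $O(n^2)$ such lines. Project every line vertically onto the $x^*y^*$ plane. A vertical segment with an endpoint on each of two lines projects to an intersection point of the two projected lines. A non-vertical line is a graph over its projection, so each intersection point determines both endpoints uniquely. There are $O(n^4)$ intersection points, and evaluating $k$ for a segment takes $O(n)$ by testing every primal point against the slice. This case costs $O(n^5)$.

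\emph{Second case.} One endpoint is one of the $O(n^3)$ vertices of the arrangement. The other endpoint is on a blue plane $b^*$ and shares the vertex's $x^*,y^*$ coordinates. Since $b^*$ is non-vertical, that endpoint is unique. This gives $O(n^4)$ candidate segments, each evaluated in $O(n)$, again $O(n^5)$. Return the best candidate over both cases. Every optimum has a canonical representative in this set, so the returned slice is optimal.

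The step I expect to be the main obstacle is justifying that the candidate set really contains an optimum. The enumeration of case one only produces points where projected lines cross, and crossings pin the endpoints to lines, not to specific positions on them. The remedy is a one-parameter argument. If the second plane merely contains a blue point plus one further point, the vertical segment can still slide along the projected line. The objective $k$ is piecewise constant along that slide and changes only where the segment meets another dual plane. So it suffices that every combinatorially distinct position is represented, either by the crossing point or by a vertex event covered in case two. I would also handle degeneracies separately: parallel projected lines, vertical dual lines, and the situations with no red point or only one blue point. These are resolved by symbolic perturbation or trivially, as in the proof of Lemma~\ref{lem:3pointsforslice}.
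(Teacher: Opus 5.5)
Your proposal is correct and follows the paper's proof essentially step for step. You reduce via Lemma~\ref{lem:3pointsforslice}, enumerate the $O(n^4)$ crossings of the $O(n^2)$ projected arrangement lines for the first case and the $O(n^3)\cdot O(n)$ vertex/blue-plane pairs for the second, and evaluate each candidate segment in $O(n)$ time. The worry in your last paragraph is unnecessary, because, as you observe yourself, two vertically aligned endpoints confined to arrangement lines are pinned to the crossing of the projected lines; there is no residual slide except when the two projections coincide, and in that degenerate case your sliding argument simply lands in the second case.
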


\section{Conclusion}

We present algorithms for finding optimal classifiers based on two linear separators in $\rf^3$.  
While general algorithms that use brute-force searching based on the structures of optimal solutions runs in $O(n^7)$ and $O(n^6)$ times, we present algorithms that work in the dual that run much faster in many cases.  
We leave the cases for wedge and (di)wedge classifiers as an important open problem.  

\bibliographystyle{plain} 
\bibliography{refs} 

@InProceedings{Glazenburg24robust,
  author =	{Glazenburg, Erwin and van der Horst, Thijs and Peters, Tom and Speckmann, Bettina and Staals, Frank},
  title =	{{Robust Bichromatic Classification Using Two Lines}},
  booktitle =	{35th International Symposium on Algorithms and Computation (ISAAC 2024)},
  pages =	{33:1--33:14},
  series =	{Leibniz International Proceedings in Informatics (LIPIcs)},
  ISBN =	{978-3-95977-354-6},
  ISSN =	{1868-8969},
  year =	{2024},
  volume =	{322},
  editor =	{Mestre, Juli\'{a}n and Wirth, Anthony},
  publisher =	{Schloss Dagstuhl -- Leibniz-Zentrum f{\"u}r Informatik},
  address =	{Dagstuhl, Germany},
  URL =		{https://drops.dagstuhl.de/entities/document/10.4230/LIPIcs.ISAAC.2024.33},
  URN =		{urn:nbn:de:0030-drops-221605},
  doi =		{10.4230/LIPIcs.ISAAC.2024.33}
}

@book{10.5555/2408018,
author = {Edelsbrunner, Herbert},
title = {Algorithms in Combinatorial Geometry},
year = {2012},
isbn = {3642648738},
publisher = {Springer Publishing Company, Incorporated},
edition = {1st}
}

@article{HurtadoMRS04-sep-two-lines,
title = {Separability by two lines and by nearly straight polygonal chains},
journal = {Discrete Applied Mathematics},
volume = {144},
number = {1},
pages = {110-122},
year = {2004},
note = {Discrete Mathematics and Data Mining},
issn = {0166-218X},
doi = {https://doi.org/10.1016/j.dam.2003.11.014},
url = {https://www.sciencedirect.com/science/article/pii/S0166218X04001891},
author = {Ferran Hurtado and Mercè Mora and Pedro A. Ramos and Carlos Seara}
}

@article{HurtadoNRS01-sep-wedges,
title = {Separating objects in the plane by wedges and strips},
journal = {Discrete Applied Mathematics},
volume = {109},
number = {1},
pages = {109-138},
year = {2001},
note = {14th European Workshop on Computational Geometry},
issn = {0166-218X},
doi = {https://doi.org/10.1016/S0166-218X(00)00230-4},
url = {https://www.sciencedirect.com/science/article/pii/S0166218X00002304},
author = {Ferran Hurtado and Marc Noy and Pedro A. Ramos and Carlos Seara}
}

@InProceedings{HurtadoSS03-sep3d,
author="Hurtado, Ferran
and Seara, Carlos
and Sethia, Saurabh",
editor="Kumar, Vipin
and Gavrilova, Marina L.
and Tan, Chih Jeng Kenneth
and L'Ecuyer, Pierre",
title="Red-Blue Separability Problems in 3D",
booktitle="Computational Science and Its Applications --- ICCSA 2003",
year="2003",
publisher="Springer Berlin Heidelberg",
address="Berlin, Heidelberg",
pages="766--775",
isbn="978-3-540-44842-6"
}

@article{Megiddo84-linear-time,
author = {Megiddo, Nimrod},
title = {Linear Programming in Linear Time When the Dimension Is Fixed},
year = {1984},
issue_date = {Jan. 1984},
publisher = {Association for Computing Machinery},
address = {New York, NY, USA},
volume = {31},
number = {1},
issn = {0004-5411},
url = {https://doi.org/10.1145/2422.322418},
doi = {10.1145/2422.322418},
journal = {J. ACM},
month = jan,
pages = {114–127},
numpages = {14}
}

@article{Chan05-lp-violations,
author = {Chan, Timothy M.},
title = {Low-Dimensional Linear Programming with Violations},
journal = {SIAM Journal on Computing},
volume = {34},
number = {4},
pages = {879-893},
year = {2005},
doi = {10.1137/S0097539703439404}
}

@book{BergCKO08-comp-geo,
author = {Berg, Mark de and Cheong, Otfried and Kreveld, Marc van and Overmars, Mark},
title = {Computational Geometry: Algorithms and Applications},
year = {2008},
isbn = {3540779736},
publisher = {Springer-Verlag TELOS},
address = {Santa Clara, CA, USA},
edition = {3rd ed.}
}

@inproceedings{KairgeldinCP24-bivariate,
author = {Kairgeldin, Rasul and Carreira-Perpi\~{n}\'{a}n, Miguel \'{A}.},
title = {Bivariate Decision Trees: Smaller, Interpretable, More Accurate},
year = {2024},
isbn = {9798400704901},
publisher = {Association for Computing Machinery},
address = {New York, NY, USA},
url = {https://doi.org/10.1145/3637528.3671903},
doi = {10.1145/3637528.3671903},
booktitle = {Proceedings of the 30th ACM SIGKDD Conference on Knowledge Discovery and Data Mining},
pages = {1336–1347},
numpages = {12},
location = {Barcelona, Spain},
series = {KDD '24}
}

@article{convexHull,
author = {Preparata, F.P. and Hong, Se},
year = {1977},
month = {02},
pages = {87-93},
title = {Convex Hull of a Finite Set of Points in Two and Three Dimensions},
volume = {20},
journal = {Communications of the ACM},
doi = {10.1145/359423.359430}
}



\end{document}